\definecolor{webred}{rgb}{.8,0,0}
\definecolor{webbrown}{rgb}{.6,0,0}
\definecolor{webgreen}{rgb}{0,0.5,0}
\definecolor{webdkgreen}{rgb}{0,0.3,0}
\definecolor{webblue}{rgb}{0,0,0.5}
\numberwithin{equation}{section}
\providecommand{\href}[2]{#2}
\newcommand{\be}{\begin{eqnarray}}
\newcommand{\beq}{\begin{eqnarray}}
\newcommand{\ee}{\end{eqnarray}}
\def\e#1\e{\begin{equation}#1\end{equation}}
\def\ea#1\ea{\begin{align}#1\end{align}}
\theoremstyle{plain}% default
\newtheorem{thm}{Theorem}[section]
\newtheorem*{thm*}{Theorem}
\newtheorem{prop}[thm]{Proposition}
\newtheorem{cor}[thm]{Corollary}
\theoremstyle{definition}
\newtheorem{rem}[thm]{Remark}
\begin{document}

\setlength{\parindent}{0cm}
\setlength{\baselineskip}{1.5em}
%\title{\bf{Non-perturbative topological strings \\ from difference equations}}
\title{\bf{Intrinsic non-perturbative topological strings}}

\author{Murad Alim\footnote{\tt{murad.alim@uni-hamburg.de}}\\
\small Department of Mathematics, University of Hamburg, Bundesstr. 55, 20146, Hamburg, Germany}

\date{}
\maketitle

\abstract{We study difference equations which are obtained from the asymptotic expansion of topological string theory on the deformed and the resolved conifold geometries as well as for topological string theory on arbitrary
families of Calabi-Yau manifolds near generic singularities at finite distance in the moduli space. Analytic solutions in the topological string coupling to these equations are found. The solutions are given by known special functions and can be used to extract the strong coupling expansion as well as the non-perturbative content. The strong coupling expansions show the characteristics of D-brane and NS5-brane contributions, this is illustrated for the quintic Calabi-Yau threefold. For the resolved conifold, an expression involving both the Gopakumar-Vafa resummation as well as the refined topological string in the Nekrasov-Shatashvili limit is obtained and compared to expected results in the literature. Furthermore, a precise relation between the non-perturbative partition function of topological strings and the generating function of non-commutative Donaldson-Thomas invariants is given. Moreover, the expansion of the topological string on the resolved conifold near its singular small volume locus is studied. Exact expressions for the leading singular term as well as the regular terms in this expansion are provided and proved. The constant term of this expansion turns out to be the known Gromov-Witten constant map contribution.
}

\clearpage

%%%%%%%%%%%%%%%%%%%%%%%%%%%%%%%%%%%%%%%%%%%%%%%%%%%%%%%%%%%%%

\tableofcontents

%%%%%%%%%%%%%%%%%%%%%%%%%%%%%%%%%%%%%%%%%%%%%%%%%%%
\section{Introduction}

Topological string theory bridges research in mathematics and physics and has been a rich source of insights for both areas. It provides a quantitative handle on physical dualities as well as exact computations, see e.~g.~\cite{Neitzke:2004ni}. Within the context of mirror symmetry, topological string theory provides the tools to study higher genus mirror symmetry \cite{Marinobook,coxkatz,Alimlectures}.

Physically, the appeal of studying topological string theory originates from the fact that it shares features of physical strings while at the same time possessing clearer mathematical structures which allow one to seek answers to difficult physical questions. One such aspect, which is the focus of this work, is the fact that the free energy of topological string theory is only defined perturbatively as an asymptotic series in the topological string coupling. This is also known to be true for physical string theories as well as for many quantum field theories, see \cite{Shenker:1990uf,Marinolecture} and references therein.

The topological string partition function of a given family of Calabi-Yau (CY) threefolds is defined perturbatively in the topological string coupling $\lambda$. The free energies at genus $g$ are related recursively to the free energies at lower genera by the holomorphic anomaly equations \cite{Bershadsky:1993ta,Bershadsky:1993cx}. The anomaly equations were identified in \cite{Witten:1993ed} as the projective flatness equations of a connection identifying Hilbert spaces obtained from polarization choices of a geometric quantization problem associated to the moduli space of the family. The topological string partition function becomes a flat section of this bundle of Hilbert spaces over the moduli space. Further details about this quantum mechanical interpretation including a possible Hamiltonian as well as a description of other states in this Hilbert space remain puzzling, see \cite{Neitzke:2007yw} for a the description of other potential states in this Hilbert space.

The OSV conjecture \cite{Ooguri:2004zv} relates the topological string wave-function to the partition function of black holes. It provides an expectation that the non-perturbative content of topological string theory should be captured by the partition function of objects which are defined non-perturbatively on the same Calabi-Yau family, namely the BPS states forming the black holes. A number of challenges, most sharply collected in \cite{Denef:2007vg}, however limit the applicability of this connection for the study of the non-perturbative structure of topological string theory. Mathematically, the perturbative definition of topological string theory on one side of mirror symmetry corresponds to the study Gromov-Witten (GW) theory, while the enumerative content of the BPS states forming the black holes is captured by Donaldson-Thomas (DT) invariants, the equivalence of the enumerative geometry content of the two is the MNOP conjecture \cite{MNOP1,MNOP2}. 

The study of the non-perturbative structure of topological string theory has been much more promising for non-compact CY manifolds which exhibit dualities with Chern-Simons theory and matrix models, see e.~g.~\cite{Marinobook}. This is also the context for which early all genus results were obtained including the all genus topological string theory on the deformed conifold, whose partition function correspond to the $c=1$ string as well as to the Gaussian matrix model \cite{Ghoshal:1995wm}. The relevance of this result is that it also provides the expected universal behavior of topological string theory near singularities in the moduli space where finitely many states of the corresponding effective theory are becoming massless. Another all-genus result for topological string theory comes from a large N duality with Chern Simons theory \cite{GV} and provides the perturbative expansion of topological string theory on the resolved conifold. The study of the non-perturbative structure of topological string theory in these two cases as well as other non-compact CY manifolds has benefited a lot from their relation to matrix models, see e.~g.~\cite{Pasquetti:2009jg,Marinolecture} and references therein.

For non-compact CY geometries, another path gives further insights into the non-perturbative structure of topological string theory. Considering mirror non-compact CY geometries whose relevant data is captured by the mirror curve, a quantum mechanical problem was put forward in \cite{ADKMV}. The curve equation in these cases is characterized by an algebraic equation in two complex variables which take values in $\mathbb{C}^*$ or $\mathbb{C}$. These variables are identified with conjugate phase space variables and the defining equation of the curve is interpreted as a Hamiltonian whose eigenstates are wave-functions. This quantum curve approach is useful for the study of the relation of topological string theory to integrable systems. The approach of \cite{ADKMV} was revisited in the context of refined topological string theory in \cite{Aganagic:2011mi}, shedding light on the relation of the quantization of integrable systems of supersymmetric theories and topological strings \cite{Nekrasov:2009rc}.

Building on the quantum curve developments as well as on a series of insights from the study of ABJM theory \cite{Aharony:2008ug} (see \cite{Marinoloc} and references therein), a proposal for the non-perturbative definition of topological string theory was put forward in \cite{Hatsuda:2013oxa} and further scrutinized in \cite{Grassi:2014zfa}, using spectral properties of the quantum mechanical problem defined by the quantum curve. Interestingly, the wave functions obtained in this way were related to the Nekrasov-Shatashvili (NS) limit \cite{Nekrasov:2009rc} of the refined topological string with the quantization parameter $\hbar\sim \frac{1}{\lambda}$ being related to the inverse of the topological string coupling rather than $\hbar \sim \lambda^2$ as is expected from the wave-function interpretation of the anomaly equation \cite{Witten:1993ed}, which suggests perhaps two dual quantum mechanical pictures for the topological string. The spectral properties of the quantum mechanical system allowed the authors of \cite{Hatsuda:2013oxa,Grassi:2014zfa} however to extract both expansions in $\lambda$ and $\frac{1}{\lambda}$. The quantum curve setting was also recently used in \cite{Coman:2018uwk,Coman:2020qgf} to propose non-perturbative partition functions for topological strings on local CY manifolds related to the class $\mathcal{S}$ theories of \cite{GMN}.

In the case of the all genus free energy of the deformed conifold, the Borel resummation gives the Barnes G-function and can be used to access the non-perturbative content of the partition function as well as the corresponding matrix model, see for instance \cite{Pasquetti:2009jg,Marinolecture}. A proposal for the non-perturbative structure of topological string theory on the resolved conifold was given in \cite{Lockhart:2012vp}, making use of correspondence with supersymmetric indices. In \cite{Hatsuda:2015owa} a modified Borel resummation was applied to the resolved conifold and the expected non-perturbative structure of \cite{Hatsuda:2013oxa,Hatsuda:2015oaa} was obtained. The expected non-perturbative structure of \cite{Hatsuda:2015owa} was further obtained in \cite{Krefl:2015vna} from the exact duality with Chern-Simons theory. 

Remarkable progress in defining non-perturbative topological string theory was achieved recently in mathematics \cite{BridgelandDT,BridgelandCon}, inspired by \cite{Gaiotto:2014bza}. In \cite{BridgelandDT} a Riemann-Hilbert (RH) problem was put forward which describes the wall-crossing phenomena in Donaldson-Thomas theory, This corresponds physically to the wall-crossing phenomena of BPS states whose recent study has been advanced by \cite{GMN} and many others. In \cite{BridgelandDT}, the solution of the RH problem for the Argyres-Douglas $A_1$ theory was given, this corresponds on the topological string side to the deformed conifold free energy.\footnote{The observation of the link to the deformed conifold has not been made in \cite{BridgelandDT}, but is perhaps obvious to the experts.} The subsequent \cite{BridgelandCon} solves the RH problem for the resolved conifold and suggests the resulting Tau function as a non-perturbative definition of the topological string theory on the resolved conifold given its analytic properties and the fact that it contains as an asymptotic expansion the Gromov-Witten theory of the resolved conifold. In a sense, the work of Bridgeland provides several missing links in the expectation that the BPS content of a given geometry provides the non-perturbative definition of topological string theory on that geometry. The details of this program however become quickly very challenging since it requires as an input the complete relevant BPS spectra and their wall-crossing behavior. This seems currently, especially for compact geometries, very challenging if not intractable.

It is natural to wonder whether there is a more intrinsic path towards non-perturbative topological string theory which does not rely on dualities to other physical or mathematical problems and is as such not limited in its scope of applicability. Given the asymptotic nature of the expansion of the free energies, the Borel resummation of the free energy as well as the application of resurgence techniques are such paths, see e.~.g.~\cite{Aniceto:2011nu,Santamaria:2013rua,Couso-Santamaria:2014iia} as well as \cite{Couso-Santamaria:2016vwq} for a matching of the resurgence results with \cite{Grassi:2014zfa}. In the case of asymptotic series stemming from differential equations with irregular singular points, the knowledge of the differential equation itself is often more powerful than the knowledge of the asymptotic expansion around singular points. Especially in problems of mathematical physics, the ODEs in question are often the ones which have been well-studied for a long time. Such a differential equation in the topological string coupling is however not part of the defining data of topological strings. The quest for such differential equation in the string coupling was the motivation for \cite{Alim:2015qma}. In that paper, the holomorphic anomaly equations \cite{Bershadsky:1993cx} as well as the polynomial structure of the higher genus topological string amplitudes \cite{Yamaguchi:2004bt,Alim:2007qj} were used to obtain a differential equation in the string coupling in a certain scaling limit. The relevant differential equation turned out to be the Airy equation. Apart from the expected asymptotic expansion in this limit the equation has a solution which is non-perturbative in the string coupling. The latter was subsequently related to non-perturbative resurgence effects of NS-branes \cite{Couso-Santamaria:2015hva}.

The aim of this work is to extend the study of the intrinsic characterization of the non-perturbative structure of topological string theory. We are in a fortunate situation where many pieces of the puzzle are already available in the recent physics and especially mathematics literature and can be readily used and put together. The starting point is a difference equation which was first proved in \cite{Iwaki} for the free energies of the WKB analysis of the Weber curve. This curve is related to the deformed conifold. A similar difference equation was proved in \cite{alim2020difference} for the free energies of the resolved conifold. Both derivations only have the asymptotic expansion as their input. A first aim of this work is to use the expected universal behavior of topological string theory on arbitrary families of CY threefolds near singular loci in the moduli space where finitely many states of the effective theory become massless and derive a difference equation for the topological string free energies in a limit around these loci. We next identify the Barnes G-function as a solution for the difference equations of the deformed conifold as well as for the universal behavior near the singularities.  A solution of the difference equation for the resolved conifold \cite{alim2020difference} was identified in \cite{alim2021integrable} using building blocks of Bridgeland's Tau function for the resolved conifold \cite{BridgelandCon}. The explicit analytic solutions can be used to obtain the strong coupling expansions of topological string theory as well as to express their non-perturbative content.  The characteristic traits of non-perturbative effects due to D-branes and NS-branes are obtained. Moreover, for the resolved conifold an expression involving both the Gopakumar-Vafa resummation as well as the refined topological string in the Nekrasov-Shatashvili limit is obtained. The latter was put forward in \cite{Hatsuda:2013oxa,Hatsuda:2015owa}, we obtain a matching with their results up to some factors which are discussed.

The organization of this work is as follows. In sec.~\ref{sec:freeenergies}, the topological string free energies are recalled as well as their Gromov-Witten and Gopakumar-Vafa expansions. We proceed with a discussion of the expected universal behavior of topological string theory near singularities where finitely many states of the underlying effective theory in $4d$ become massless. The explicit expressions of the topological string free energies for the deformed and resolved conifold are given. In sec.~\ref{sec:diffeq}, the difference equations for the deformed and resolved conifold geometries are introduced and a similar equation for the universal behavior of the free energies in a limit around singular points is derived. We proceed with discussing the analytic solutions in the string coupling of the difference equations and extract their strong coupling expansion as well as the non-perturbative content in sec.~\ref{sec:nonpertcontent}. We furthermore give an exact non-perturbative relation between the topological string partition function and the generating function of non-commutative DT invariants. Finally we study in detail the expansion of the free energies of the resolved conifold near the locus where the $\mathbb{P}^1$ of the resolution shrinks to zero and the corresponding coordinate $t\rightarrow 0$. We prove an exact expression of the leading singular behavior as well as the sub-leading terms. In particular the constant terms in this expansion turn out to be the contributions of constant maps in Gromov-Witten theory, the higher order terms are polynomials in the coordinate. Moreover, this provides a mathematical proof in this case of the \emph{gap condition} which is expected on physical grounds and was used in \cite{Huang:2006si,Huang:2006hq} in the study of higher genus mirror symmetry. We finish in sec.~\ref{sec:conclusions} with the conclusions.

%%%%%%%%%%%%%%%%%%%%%%%%%%%%%%%%%%%%%%%%%%%%%%%%%%%%%%%

\section{Topological string free energies}\label{sec:freeenergies}

To a mirror family of CY threefolds, topological string theory associates the topological string partition function which is defined as an asymptotic series in the topological string coupling $\lambda$, summing over the free energies associated to the world-sheets of genus $g$:
\begin{equation}
Z_{top} (\lambda,t)= \exp \left(\sum_{g=0}^{\infty} \lambda^{2g-2} \mathcal{F}^{g}(t)\right)\,.
\end{equation}
Where $t=(t^1,\dots,t^n)$ is a set of distinguished local coordinates on the underlying moduli space $\mathcal{M}$, which is of dim $n=h^{1,1}(X_t)=h^{2,1}(\check{X}_{t(z}))$. $X_t$ and $\check{X}_{t(z)}$ are a mirror pair of CY threefolds which correspond to the A-model and B-model sides of mirror symmetry. The map $t(z)$ on the B-model side expressing the distinguished coordinates in terms of the more natural complex structure coordinates $z$ is the mirror map. It is useful to consider the total space of a line bundle $\mathcal{L}\rightarrow \mathcal{M}$ whose sections correspond to a distinguished vacuum state in the underlying SCFT and which has a different geometric interpretation on both sides of mirror symmetry. $\mathcal{M}$ is a projective special K\"ahler manifold. The special geometry as well as the holomorphic anomaly equations of BCOV, together with the boundary conditions of sec.~\ref{sec:boundary} can be used to geometrically characterize the topological string free energies at each genus. The latter are in particular non-holomorphic sections of $\mathcal{L}^{2-2g}$ \cite{Bershadsky:1993cx}. A holomorphic limit can be considered by taking the base point on $\mathcal{M}$ to $i\infty$ and expanding in canonical coordinates. 

\subsection{GW and GV expansions}
In the holomorphic limit together with an expansion around a distinguished large volume point in the moduli space, the topological string free energies become the generating functions of higher genus Gromov-Witten invariants on the A-model side of mirror symmetry. The GW potential of $X$ is the following formal power series:
\begin{equation}
F(\lambda,t) = \sum_{g\ge 0}  \lambda^{2g-2} F^g(t)= \sum_{g\ge 0}  \lambda^{2g-2} \sum_{\beta\in H_2(X,\mathbb{Z})}  N^g_{\beta} \,q^{\beta}\, ,
\end{equation}
where $q^{\beta} := \exp (2\pi i \langle t,\beta \rangle)$ is a formal variable living in a suitable completion of the effective cone in the group ring of $H_2(X,\mathbb{Z})$.

The GW potential can be furthermore written as:
\begin{equation}
F=F_{\beta=0} + \tilde{F}\,,
\end{equation}
where $F_{\beta=0}$ denotes the contribution from constant maps and $ \tilde{F}$ the contribution from non-constant maps. The constant map contribution at genus 0 and 1 are $t$ dependent and the higher genus constant map contributions take the universal form \cite{Faber}:
\begin{equation}
F_{\beta=0}^g = \frac{\chi(X)(-1)^{g-1}\, B_{2g}\, B_{2g-2}}{4g (2g-2)\, (2g-2)!}\,, \quad g\ge2\,,
\end{equation}
where $\chi(X)$ is the Euler characteristic of $X$ and the Bernoulli numbers $B_n$ are generated by:
\begin{equation}
\frac{w}{e^w-1} = \sum_{n=0}^{\infty} B_n \frac{w^n}{n!}\,.
\end{equation}

The Gopakumar-Vafa (GV) resummation of the GW potential \cite{Gopakumar:1998ii,Gopakumar:1998jq} reformulates the non-constant part of the GW potential in terms of the Gopakumar-Vafa invariants  $n^g_{\beta} \in \mathbb{Z}$ which are given by a count of electrically charged $M_2$ branes in an M-theory setup. The GW potential can thus be written as:
\begin{equation}\label{GVresum}
\tilde{F}(\lambda,t)= \sum_{\beta>0}\sum_{g\ge 0} n^g_{\beta}\, \sum_{k\ge 1} \frac{1}{k} \left( 2 \sin \left( \frac{k\lambda}{2}\right)\right)^{2g-2} q^{k\beta}\,.
\end{equation}

in particular
$$ \tilde{F}^0(t)=\sum_{\beta>0} n_{\beta}^0 \textrm{Li}_3(q^{\beta})\,, \quad q^{\beta}= \exp(2\pi i t^{\beta})\,.$$ 

%%%%%%%%%%%%%%%%%%%%%%%%%%%%%%%%%%%%%%%%%%%%%%%%%%%%%%%%

\subsection{Behavior near singularities}\label{sec:boundary}
Fixing a frame for $\mathcal{L}$ we will denote the functions, which are obtained from $\mathcal{F}^g \in \mathcal{L}^{2g-2}$ by $F^g$. The leading singular behavior of the free energy $F^g$ at a conifold locus has been determined in \cite{Bershadsky:1993ta,Bershadsky:1993cx,Ghoshal:1995wm,Antoniadis:1995zn,Gopakumar:1998ii,Gopakumar:1998jq}
\begin{equation} \label{Gap}
 F^g(t_c)= b \frac{B_{2g}}{2g (2g-2) t_c^{2g-2}} + O(t^0_c),
\qquad g>1\,.
\end{equation}
Here $t_c\sim \Delta^{\frac{1}{m}}$ is the special coordinate at the discriminant locus $\Delta=0$.\footnote{$\Delta$ is usually defined using the algebraic moduli of the problem, see e.~g.~appendix \ref{sec:quintic}.} For a conifold singularity $b=1$ and $m=1$. In particular the leading singularity in \eqref{Gap} as well as the absence of subleading singular terms follows from the Schwinger loop computation of \cite{Gopakumar:1998ii,Gopakumar:1998jq}, which computes the effect of the extra massless hypermultiplet
in the space-time theory \cite{Vafa:1995ta}. The singular structure and the ``gap''
of subleading singular terms have been also observed in the dual matrix model
\cite{Aganagic:2002wv} and were first used in \cite{Huang:2006si,Huang:2006hq}
to fix the holomorphic ambiguity at higher genus. The space-time derivation of \cite{Gopakumar:1998ii,Gopakumar:1998jq} is
not restricted to the conifold case and applies also to the case $m>1$
singularities which give rise to a different spectrum of
extra massless vector and hypermultiplets in space-time.
The coefficient of the Schwinger loop integral is a weighted trace over the spin of the particles~\cite{Vafa:1995ta, Antoniadis:1995zn} leading to the prediction $b=n_H-n_V$ for the coefficient of the leading singular term. A higher genus result with a singularity leading to  $b=1-1=0$ was studied in \cite{Alim:2008kp}, following the effective field theory analysis of \cite{Klemm:1996kv}.

%%%%%%%%%%%%%%%%%%%%%%%%%%%%%%%%%%%%%%%%%%%%%%%%%%%

\subsection{Deformed and resolved conifolds}

The conifold singularity refers to a singular point in a threefold that locally looks like
\begin{equation}
(x_1\, x_2 - x_3 \,x_4=0) \subset \mathbb{C}^4\,,
\end{equation}
the singularity can be deformed by introducing a parameter $a\in\mathbb{C}^*$, after changes of the local coordinates in $\mathbb{C}^4$ this can be brought to the form:
\begin{equation}
X_{a}= \left\{ x_1\, x_2 + y^2= x^2 - 4 a  \right\}
\end{equation}
which defines the family of non-compact CY threefolds known as the deformed conifold. Note that the addition of a complex parameter in the defining equation amounts to a change of the complex structure so it is natural to study this geometry using the B-model topological string theory. The result of \cite{Ghoshal:1995wm} is that the topological string free energy on this geometry obtained from the relation to the $c=1$ string has the form\footnote{Compared to \cite{Ghoshal:1995wm} we have added the $\lambda$ dependence as well as the $-\frac{3}{4}a^2$ in order to match more recent results, such as \cite{Iwaki}.}:
\begin{equation}\label{defconfree}
F(\lambda,a)= \lambda^{-2 } \,  \left( \frac{a^2}{2} \log a -\frac{3}{4} a^2\right) -\frac{1}{12} \log a +  \sum_{g=2}^{\infty} \frac{B_{2g}}{2g(2g-2) a^{2g-2}} \lambda^{2g-2}\,.
\end{equation}
We note that in the quantum curve setting, the curve which is quantized is: 
\begin{equation}
\Sigma_a:= \left\{  y^2=x^2-4a\subset \mathbb{C}^2\right\}\,,
\end{equation}
and corresponds to the harmonic oscillator. In the exact WKB setting, this curve is known as the Weber curve and is discussed, e.~g.~in \cite{Iwaki}. Furthermore the WKB analysis encodes the BPS content of the corresponding effective $4d,\mathcal{N}=2$ theory obtained from compactifying type IIB string theory on this non-compact CY. In this case this gives the Argyres-Douglas $A_1$ theory.

The CY threefold given by the total space of the rank two bundle over the projective line:
\begin{equation}
X_t := \mathcal{O}(-1) \oplus \mathcal{O}(-1) \rightarrow \mathbb{P}^1\,,
\end{equation}
corresponds to the resolution of the conifold singularity in $\mathbb{C}^4$ and is known as the resolved conifold. This geometry is defined on the A-model side of mirror symmetry and $t$ corresponds to:
\begin{equation}
t= \int_{C} B+ i \omega\,,
\end{equation}
where $B\in H^{2}(X,\mathbb{R})/H^{2}(X,\mathbb{Z})$ is the B-field, $\omega$ is the K\"ahler form and $C$ corresponds to the $\mathbb{P}^1$ class in this geometry. The GW potential for this geometry was determined in physics \cite{Gopakumar:1998ii,GV}, and in mathematics \cite{Faber} with the following outcome for the non-constant maps:\footnote{See also \cite{MM} for the determination of $F^g$ from a string theory duality and the explicit appearance of the polylogarithm expressions.}
\begin{equation}\label{resconfree}
\tilde{F}(\lambda,t)= \sum_{g=0}^{\infty} \lambda^{2g-2} \tilde{F}^g(t)= \frac{1}{\lambda^2} \textrm{Li}_{3}(q)+ \sum_{g=1}^{\infty} \lambda^{2g-2} \frac{(-1)^{g-1}B_{2g}}{2g (2g-2)!} \textrm{Li}_{3-2g} (q) \, ,
\end{equation}
where $q:=\exp(2\pi i \,t)$ and the polylogarithm ist defined by:
\begin{equation}
\textrm{Li}_s(z) = \sum_{n=0}^{\infty} \frac{z^n}{n^s}\, ,\quad s\in \mathbb{C}\,.
\end{equation}

%%%%%%%%%%%%%%%%%%%%%%%%%%%%%%%%%%%%%%%%%%%
%\section{Topological strings near finite distance singularities}

%%%%%%%%%%%%%%%%%%%%%%%%%%%%%%%%%%%%%%%%%%%

\section{Difference equations and their solutions}\label{sec:diffeq}
\subsection{Difference equations}
In the following we will review the derivation of a difference equation which was obtained in \cite{Iwaki} for the free energies of the Weber curve which correspond to the free energies of the deformed conifold geometry and which was adapted in \cite{alim2020difference} for the free energies of the resolved conifold.  

\begin{thm} \cite{Iwaki,alim2020difference} \label{diffeq} The free energy of the deformed conifold \cite{Iwaki} satisfies the following difference equation:
\begin{equation}
F(\lambda,a+\lambda) - 2 F(\lambda,a) + F(\lambda,a-\lambda)= \frac{\partial^2}{\partial a^2}\, F^0(a) \,,\end{equation}
with:
\begin{equation}
F^0(a)= \frac{1}{2} a^2 \log a -\frac{3}{4} a^2\,,
\end{equation}
and the free energy of the resolved conifold satisfies \cite{alim2020difference}:
\begin{equation}
\tilde{F}(\lambda,t+\check{\lambda}) - 2 \tilde{F}(\lambda,t) + \tilde{F}(\lambda,t-\check{\lambda})= \left(\frac{1}{2\pi }\frac{\partial}{\partial t}\right)^2\, \tilde{F}^0(t) \,,\quad \check{\lambda}=\frac{\lambda}{2\pi}\,.
\end{equation}
with
\begin{equation}
\quad \tilde{F}_{0}(t)= \textrm{Li}_{3}(q)\,.
\end{equation}
\end{thm}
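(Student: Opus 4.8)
The plan is to recognise the left-hand side of each equation as a discrete second difference, i.e.\ a lattice Laplacian, and to rewrite it as the exponentiated-derivative operator
\begin{equation}
\Box_\lambda := e^{\lambda\partial_a}-2+e^{-\lambda\partial_a}=\left(2\sinh\tfrac{\lambda}{2}\partial_a\right)^2,
\end{equation}
with the analogous operator $\Box_{\check\lambda}$ in $t$ for the resolved case. The whole point is that both free energies admit an all-genus resummed form in which a factor $\big(2\sin(\cdots)\big)^{-2}$ or $\big(2\sinh(\cdots)\big)^{-2}$ appears, and that the difference operator produces \emph{exactly} the inverse of this factor, so that a massive cancellation collapses the double sum (resp.\ integral) to something elementary.

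For the \textbf{resolved conifold} I would start not from \eqref{resconfree} but from the Gopakumar-Vafa form \eqref{GVresum}. Since the only nonzero invariant is $n^0_{1}=1$, one has the exact identity $\tilde F(\lambda,t)=\sum_{k\ge1}\frac1k\big(2\sin\frac{k\lambda}{2}\big)^{-2}q^{k}$, whose $\lambda$-expansion reproduces \eqref{resconfree} (the $\lambda^{-2}$ piece being $\sum_k q^k/k^3=\textrm{Li}_3(q)$). Because $q=e^{2\pi i t}$ and $\check\lambda=\lambda/2\pi$, the shift $t\mapsto t\pm\check\lambda$ is simply $q^{k}\mapsto q^{k}e^{\pm i k\lambda}$, so that $\Box_{\check\lambda}q^{k}=q^{k}\big(e^{ik\lambda}-2+e^{-ik\lambda}\big)=-q^{k}\big(2\sin\frac{k\lambda}{2}\big)^{2}$. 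The prefactor $\big(2\sin\frac{k\lambda}{2}\big)^{-2}$ is inert under the shift, so it cancels term by term, leaving $\Box_{\check\lambda}\tilde F=-\sum_{k\ge1}q^{k}/k=-\textrm{Li}_1(q)=\log(1-q)$. One then checks the right-hand side with the polylogarithm ladder $\frac{1}{2\pi}\partial_t\textrm{Li}_s(q)=i\,\textrm{Li}_{s-1}(q)$, which gives $\big(\frac{1}{2\pi}\partial_t\big)^2\textrm{Li}_3(q)=-\textrm{Li}_1(q)$, matching. This part is clean and exact.

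For the \textbf{deformed conifold} there is no $q$-series, so I would use the Borel/Schwinger representation of the asymptotic tail, $F(\lambda,a)=-\int_0^\infty\frac{ds}{s}\,\frac{e^{-as}}{\big(2\sinh\frac{\lambda s}{2}\big)^2}$, with the sign and $s\to0$ subtractions fixed by matching \eqref{defconfree} (the kernel expands as $\frac{1}{\lambda^2 s^2}-\frac{1}{12}+\frac{\lambda^2 s^2}{240}-\cdots$, reproducing the $B_{2g}$ coefficients). The shift $a\mapsto a\pm\lambda$ multiplies the integrand by $e^{\mp\lambda s}$, so $\Box_\lambda$ inserts the factor $e^{\lambda s}-2+e^{-\lambda s}=\big(2\sinh\frac{\lambda s}{2}\big)^2$, which again cancels the kernel and leaves the elementary $\int_0^\infty\frac{ds}{s}e^{-as}$; after the same subtraction this Frullani-type integral equals $\log a$, while one verifies independently that $\partial_a^2 F^0=\partial_a^2\big(\tfrac12 a^2\log a-\tfrac34 a^2\big)=\log a$. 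A conceptually equivalent route, which I would keep as a cross-check, is that $F$ equals $\log G(a/\lambda+1)$ plus $\tfrac12(a/\lambda)^2\log\lambda$ and terms at most linear in $a/\lambda$, where $G$ is the Barnes function; the functional equations $G(z+1)=\Gamma(z)G(z)$ and $\Gamma(z+1)=z\Gamma(z)$ give second difference $\log(a/\lambda)$ for the $\log G$ part in $z=a/\lambda$, the quadratic term contributes the missing $\log\lambda$, the linear and constant pieces drop, and the two logarithms combine to $\log a$.

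The main obstacle is entirely in the deformed case: the Schwinger integral is only defined after subtraction, so one must check that the $s\to0$ subtractions used to reproduce the genus-$0$ and genus-$1$ data are the same ones that fix the additive constant surviving the cancellation, and that the overall sign is consistent (a naive reading gives $\mp\log a$). The resolved case carries no such subtlety. In both cases one should also justify interchanging the sum over $k$ (resp.\ the integral over $s$) with the expansion in $\lambda$; since the cancellation $e^{ik\lambda}-2+e^{-ik\lambda}=-(2\sin\frac{k\lambda}{2})^2$ is an exact identity of entire functions, this interchange is legitimate order by order in the formal variable $\lambda$, which is all that the difference equation---a statement about the asymptotic series---requires.
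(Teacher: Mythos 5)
Your resolved-conifold argument is correct, and it is essentially the paper's own proof in diagonalized form. The paper writes $\tilde F^g=\frac{(-1)^{g-1}B_{2g}}{2g(2g-2)!}\,\theta_q^{2g-2}\mathrm{Li}_1(q)$ and invokes the Bernoulli generating-function identity $(e^w-2+e^{-w})\bigl(w^{-2}-\sum_{g\ge1}\frac{B_{2g}}{2g(2g-2)!}w^{2g-2}\bigr)=1$ as an operator identity in $w=\check\lambda\,\partial_t=i\lambda\theta_q$ acting on $\mathrm{Li}_1(q)$; since the $q^k$ are an eigenbasis of $\theta_q$, evaluating that identity mode by mode is precisely your cancellation of $e^{ik\lambda}-2+e^{-ik\lambda}=-(2\sin\frac{k\lambda}{2})^2$ against the Gopakumar--Vafa kernel. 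Same mechanism, different packaging.

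The deformed-conifold half, however, has a genuine gap --- the one you flag but do not close. The representation $F=-\int_0^\infty\frac{ds}{s}\,e^{-as}(2\sinh\frac{\lambda s}{2})^{-2}$ diverges at $s=0$ like $\int ds\,\bigl(\frac{1}{\lambda^2 s^3}-\frac{a}{\lambda^2 s^2}+(\frac{a^2}{2\lambda^2}-\frac{1}{12})\frac{1}{s}\bigr)$, so $F$ is only defined after subtractions which are polynomials of degree two in $a$ with $\lambda$- and regulator-dependent coefficients. Such terms are exactly what the second difference does \emph{not} annihilate: $(a+\lambda)^2-2a^2+(a-\lambda)^2=2\lambda^2$, so an unfixed subtraction $\frac{\alpha(\lambda)}{2}\,a^2$ pollutes the right-hand side by $\alpha(\lambda)\lambda^2$, which for the regulator-dependent (logarithmically divergent) pieces is not even finite. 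Pinning this down is not bookkeeping; it is the entire genus-$0$/genus-$1$ content of the statement, and your argument stops exactly there. A second issue: the subtracted integral is a Borel-type sum, an actual function of $\lambda$, whereas the theorem concerns the formal series; descending from a functional identity to the asymptotic series requires a uniformity-in-$a$ argument for expansions evaluated at the moving points $a\pm\lambda$, which you do not supply (the same caveat applies to your Barnes-$G$ cross-check, which is really the paper's separate proposition on $F_{np}$, not its proof of this theorem). The paper's proof needs none of this because it never leaves the formal level: it substitutes $w=\lambda\partial_a$ into the same Bernoulli identity and acts on $\log a$, using $\partial_a^{2g-2}\log a=-(2g-3)!\,a^{2-2g}$ and taking $\partial_a^{-2}\log a$ to be the specific anti-derivative $F^0(a)=\frac{a^2}{2}\log a-\frac{3}{4}a^2$; this reproduces the series \eqref{defconfree} term by term on the left and $\log a=\partial_a^2F^0$ on the right, with no integrals, no subtractions, no resummation. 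Your $s$-integral is just the Laplace-transform avatar of $\partial_a^{2g-2}$ acting on $\log a$; replacing it by the formal operator identity --- as you in effect already do in the resolved case --- is what closes the gap.
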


The two versions of the theorem were proved in \cite{Iwaki,Iwaki2,alim2020difference}. The proof is included in the appendix \ref{sec:proof}, a notable feature is that it only requires the asymptotic expansion. 

For the discussion of the universal structure of topological strings near finite distance singularities, the following corollary of the above theorem is obtained:

\begin{cor}\label{diffarb}
For $X_t$ and $\check{X}_{t(z)}$, a mirror pair of CY threefolds which corresponding to the $A-$ and $B-$sides of mirror symmetry and which can be thought of as the fibers of corresponding families of over a base manifold $\mathcal{M}$ with  dim $\mathcal{M}=n$, where $n=h^{1,1}(X_t)=h^{2,1}(\check{X}_{t(z}))$.  We consider local coordinates $t=\left\{t^1,\dots,t^n\right\}$.  Let $t_c$ be a coordinate near a singularity of finite distance in the special K\"ahler metric, We assume the following behavior of  the topological string free energies, motivated by physical expectations:\footnote{In the following, w.~l.~o.~g.~we only highlight the dependence on the coordinate which corresponds to the singularity, for a study of the behavior of toplogical strings on higher dimensional moduli spaces near singularities in a compact setting see, e.~g.~\cite{Haghighat:2009nr,Alim:2012ss}.}
\begin{equation}
F^0(t_c)= b\left( \frac{1}{2} t_c^2 \log t_c -\frac{3}{4} t_c^2 \right)+ \mathcal{O}(t_c^3)  \,,\quad F^1(t_c)= -\frac{b}{12} \log t_c + \mathcal{O}(t_c)\,,
\end{equation}
and 
\begin{equation}
 F^{g}(t_c)= b \frac{B_{2g}}{2g (2g-2) t_c^{2g-2}} + O(t^0), \quad g\ge 2\,,
\end{equation}
consider now $\Lambda \in \mathbb{C}^*$ and the rescaling:
$$ \lambda' = \lambda \cdot \Lambda\,, \quad  t_c' = t_c \cdot \Lambda\,,$$
and define:
$$ F'(\lambda',t_c') := \lim_{\Lambda \rightarrow \infty} \left( F(\lambda',t_c') + b\frac{(t_c')^2}{2}\log \Lambda -\frac{b}{12}\log \Lambda\right)\,,$$
then the following difference equation is satisfied by $F'$:
\begin{equation}\label{diffeqarb}
F'(\lambda',t^{\circ},t_c'+\lambda') - 2 F'(\lambda',t^{\circ},t'_c) + F'(\lambda',t^{\circ},t_c'-\lambda') = b\,  \log t'_c \,.
\end{equation}
\end{cor}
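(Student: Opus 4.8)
The plan is to reduce \eqref{diffeqarb} to the deformed conifold case of Theorem~\ref{diffeq} by showing that the rescaled, renormalized free energy $F'$ is nothing but $b$ times the deformed conifold free energy \eqref{defconfree}. First I would substitute the scaling $\lambda = \lambda'/\Lambda$ and $t_c = t_c'/\Lambda$ into the genus expansion $F(\lambda,t_c)=\sum_{g\ge 0}\lambda^{2g-2}F^g(t_c)$ and analyze the $\Lambda\to\infty$ behavior coefficient-by-coefficient in the topological string coupling, treating the spectator coordinates $t^{\circ}$ as inert parameters since the assumed singular behavior concerns only $t_c$.

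For $g\ge 2$, I expect the factor $(\lambda'/\Lambda)^{2g-2}$ to cancel exactly against the $\Lambda^{2g-2}$ produced by the leading singular term $b\,B_{2g}/(2g(2g-2)(t_c'/\Lambda)^{2g-2})$, while the non-universal remainder $O(t_c^0)$ is suppressed by $\Lambda^{-(2g-2)}\to 0$; hence the limit of the genus-$g$ coefficient is precisely $b\,B_{2g}/(2g(2g-2)(t_c')^{2g-2})$. For $g=0$ and $g=1$ the same substitution produces, in addition to the finite deformed conifold pieces $b(\tfrac12 (t_c')^2\log t_c' - \tfrac34 (t_c')^2)/(\lambda')^2$ and $-\tfrac{b}{12}\log t_c'$, two $\log\Lambda$ divergences coming from the expansion $\log(t_c'/\Lambda)=\log t_c'-\log\Lambda$; these are exactly what the two counterterms in the definition of $F'$ are designed to remove (with the genus-zero counterterm carrying the appropriate $1/(\lambda')^2$ normalization), and again the subleading $O(t_c^3)$ and $O(t_c)$ corrections vanish. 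Collecting terms gives $F'(\lambda',t^{\circ},t_c') = b\,F_{\mathrm{def}}(\lambda',t_c')$, with $F_{\mathrm{def}}$ the deformed conifold free energy of \eqref{defconfree}.

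With this identification the difference equation is immediate: by Theorem~\ref{diffeq} the deformed conifold free energy satisfies $F_{\mathrm{def}}(\lambda',t_c'+\lambda') - 2F_{\mathrm{def}}(\lambda',t_c') + F_{\mathrm{def}}(\lambda',t_c'-\lambda') = \partial_{t_c'}^2 F^0(t_c') = \log t_c'$, and multiplying by the constant $b$ yields \eqref{diffeqarb}, the spectator coordinates $t^{\circ}$ passing through unchanged. Since $F'$ is the explicit analytic function $b\,F_{\mathrm{def}}$, evaluating it at the shifted arguments $t_c'\pm\lambda'$ needed to form the second difference causes no difficulty, so that no interchange of the limit with the difference operator is required; one simply applies the defining limit at each of the three points $t_c'-\lambda'$, $t_c'$, $t_c'+\lambda'$ separately.

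I expect the main obstacle to be the control of the $\Lambda\to\infty$ limit rather than the final algebra: one must verify that the two counterterms cancel the $\log\Lambda$ divergences with exactly the right coefficients (in particular fixing the $\lambda'$-dependence of the genus-zero counterterm) and, more importantly, that every non-universal subleading term in $F^0$, $F^1$ and in the $F^g$ for $g\ge 2$ is suppressed by a strictly positive power of $\Lambda^{-1}$, so that the renormalized limit exists and truncates the genus expansion exactly to the deformed conifold form. Once this uniform suppression is established, the remaining steps are purely formal.
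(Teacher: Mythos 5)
Your proposal matches the paper's proof essentially verbatim: the paper likewise substitutes the rescaled variables into the genus expansion, cancels the $\log\Lambda$ divergences against the counterterms, identifies $F'(\lambda',t_c') = b\,F_{\mathrm{def}}(\lambda',t_c')$ with $F_{\mathrm{def}}$ the deformed conifold free energy, and then invokes the difference equation of Theorem~\ref{diffeq} for the deformed conifold. Your side remark that the genus-zero counterterm must carry a $1/(\lambda')^2$ normalization is correct and is in fact a point on which the paper's own statement and displayed computation are imprecise (they write $b\,(t_c')^2\log\Lambda/2$ without the $1/(\lambda')^2$ factor, under which the limit would not strictly exist).
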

\begin{proof}
First consider the all-genus topological string free energy near $t_c\rightarrow 0$ whose behavior is given by the assumptions:
\begin{equation}
\begin{split}
F(\lambda,t_c)&= \sum_{g=0}^{\infty} \lambda^{2g-2} F^g(t_c)= \frac{1}{\lambda^2}\left( b\left( \frac{1}{2} t_c^2 \log t_c -\frac{3}{4} t_c^2 \right)+ \mathcal{O}(t_c^3)\right) \\
&-\frac{b}{12} \log t_c + \mathcal{O}(t_c) + b \sum_{g=2}^{\infty} \lambda^{2g-2} \left(\frac{B_{2g}}{2g (2g-2) t_c^{2g-2}} + \mathcal{O}(t_c^0)\right)\,.
\end{split}
\end{equation}
Inserting the rescaled $t_c'$ and $\lambda'$ we obtain:
\begin{equation}
\begin{split}
F(\lambda',t'_c)&= \sum_{g=0}^{\infty} (\lambda')^{2g-2} F^g(t'_c)= \frac{1}{(\lambda')^2}  b\left( \frac{1}{2} (t'_c)^2 \log t'_c -\frac{3}{4} (t'_c)^2 \right)- b\frac{(t'_c)^2}{2}\log \Lambda +  \mathcal{O}(1/\Lambda)\\
& -\frac{b}{12} \log t'_c +\frac{b}{12}\log \Lambda +  \mathcal{O}(1/\Lambda) + b \sum_{g=2}^{\infty} (\lambda')^{2g-2} \left(\frac{B_{2g}}{2g (2g-2) (t'_c)^{2g-2}} + \mathcal{O}(1/\Lambda^{2g-2})\right)\,.
\end{split}
\end{equation}
Hence we obtain for:
\begin{equation}
F'(\lambda',t_c') := \lim_{\Lambda \rightarrow \infty} \left( F(\lambda',t_c') + b\frac{(t_c')^2}{2}\log \Lambda -\frac{b}{12}\log \Lambda\right) = b\cdot F_{\textrm{def}}(\lambda',t'_c)\,,
\end{equation}
where $F_{\textrm{def}}$ is the free energy of the deformed conifold \ref{defconfree}. The proof of the corollary therefore proceeds as in the latter case which is given in the appendix.
\end{proof}

\subsection{Solution for the deformed conifold}
We proceed with the discussion of the solutions of the difference equations. We begin by introducing the Barnes G-function, which is defined by:
\begin{equation}\label{BarnesG}
\begin{split}
G(z+1)&= \Gamma(z) \, G(z)\,, \quad z\in \mathbb{C}\,, \\
G(1)&=G(2)=G(3)=1\, ,\quad \frac{d^3}{dz^3} \log G(z) \ge 0\,, \quad z>0.
\end{split}
\end{equation}

One of the equivalent forms to express the G-function is the Weierstrass canonical product, see \cite{Adamchik}:
\begin{equation}
G(z+1)= (2\pi)^{\frac{z}{2}} \exp \left(  - \frac{z+z^2(1+\gamma)}{2}\right) \, \prod_{k=1}^{\infty} \left( 1+ \frac{z}{k} \right)^k \exp \left(\frac{z^2}{2k} -z  \right)\,,
\end{equation}
where $\gamma$ is the Euler-Mascheroni constant. The logarithm Barnes G-function has moreover the following Taylor expansion around $z=0$:
\begin{equation}
\log G(z+1)= \frac{1}{2} \left( \log 2\pi -1\right) z - (1+\gamma) \frac{z^2}{2} +\sum_{n=3}^{\infty} (-1)^{n-1} \zeta(n-1) \frac{z^n}{n}\,,
\end{equation}
as well as asymptotic expansion for $z\rightarrow \infty$.
\begin{equation}
\log G(z+1) = \frac{z^2}{2} \left(  \log z - \frac{3}{2}\right) -\frac{1}{12} \log z - z \zeta'(0) + \zeta'(-1)- \sum_{g=2}^{\infty} \frac{B_{2g}}{2g(2g-2) z^{2g-2}}\,,
\end{equation}
where $\zeta$ is the $\zeta-$function. We define:
\begin{equation}\label{defconnonp}
F_{np}(\lambda,a):= \log G\left(1+\frac{a}{\lambda}\right) +\frac{a^2}{2\lambda^2} \log \lambda+ \frac{a}{\lambda} \zeta'(0) + \frac{1}{12}\log(\lambda)+\zeta'(-1)\,,
\end{equation}
and obtain the following:
\begin{prop} 
$F_{np}(\lambda,a) $ is the unique solution to the difference equation for the free energies of the deformed conifold with asymptotic behavior fixed by \eqref{defconfree}.
\end{prop}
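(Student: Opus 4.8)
The plan is to prove the Proposition in three stages: first verify that $F_{np}$ solves the difference equation of Theorem \ref{diffeq}, then check that its $\lambda\to 0$ asymptotics reproduce \eqref{defconfree}, and finally establish uniqueness. Existence is a direct computation driven by the defining functional equation of the Barnes $G$-function in \eqref{BarnesG}, so I would dispatch it first; uniqueness is where I expect the real difficulty.

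For the verification I would set $z=a/\lambda$ and take the second difference of each summand of \eqref{defconnonp} separately. Under the shifts $a\mapsto a\pm\lambda$ the argument becomes $z\mapsto z\pm 1$, and iterating $G(z+1)=\Gamma(z)\,G(z)$ telescopes the $\log G$-piece to $\log G(z+2)-2\log G(z+1)+\log G(z)=\log\Gamma(z+1)-\log\Gamma(z)=\log z=\log a-\log\lambda$. The quadratic counterterm contributes $\tfrac{\log\lambda}{2\lambda^2}\bigl[(a+\lambda)^2-2a^2+(a-\lambda)^2\bigr]=\log\lambda$, since the second difference of $a^2$ equals $2\lambda^2$; the term linear in $a$ and the two constants have vanishing second difference. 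Adding these gives $\log a$, which is precisely $\partial_a^2 F^0(a)=\partial_a^2\bigl(\tfrac12 a^2\log a-\tfrac34 a^2\bigr)$, so $F_{np}$ satisfies the difference equation.

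For the asymptotics I would substitute the $z\to\infty$ expansion of $\log G(z+1)$ recorded just above \eqref{defconnonp}, again with $z=a/\lambda$. The point to isolate is that the counterterms $\tfrac{a^2}{2\lambda^2}\log\lambda$, $\tfrac a\lambda\zeta'(0)$, $\tfrac1{12}\log\lambda$ and $\zeta'(-1)$ are tailored exactly to absorb the $\log\lambda$- and $\zeta'$-dependent pieces generated by $\log z=\log a-\log\lambda$: the quadratic counterterm cancels the $\log\lambda$ coming from $\tfrac{z^2}{2}\log z$, and the linear counterterm cancels the $-z\zeta'(0)$ contribution. What then survives, order by order in $\lambda$, is $\tfrac{1}{\lambda^2}\bigl(\tfrac{a^2}{2}\log a-\tfrac34 a^2\bigr)-\tfrac1{12}\log a+\sum_{g\ge2}\tfrac{B_{2g}}{2g(2g-2)a^{2g-2}}\lambda^{2g-2}$, i.e.\ exactly the expansion \eqref{defconfree} that the statement prescribes.

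The genuinely delicate step is uniqueness, and I would treat it as the main obstacle. If $F_1,F_2$ are two solutions with the prescribed asymptotics, their difference $D$ solves the homogeneous equation $D(\lambda,a+\lambda)-2D(\lambda,a)+D(\lambda,a-\lambda)=0$ and has vanishing asymptotic expansion in $\lambda$. Restricting to each progression $a_0+\lambda\mathbb{Z}$ turns this into the scalar recurrence $d_{n+1}-2d_n+d_{n-1}=0$ with affine solutions, so the general solution is $D=c_1(\lambda,a)+\tfrac a\lambda\,c_2(\lambda,a)$ with $c_1,c_2$ of period $\lambda$ in $a$; the substance of the argument is to force such a $D$ with trivial asymptotic expansion to vanish identically. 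The cleanest route I see is to avoid wrestling with the periodic ambiguity directly and instead invoke the characterization built into \eqref{BarnesG}: the functional equation $G(z+1)=\Gamma(z)G(z)$ together with the normalization and the convexity requirement $\tfrac{d^3}{dz^3}\log G(z)\ge 0$ for $z>0$ pins down $G$ uniquely, a Bohr--Mollerup--Artin-type statement, so a solution that is analytic and has the growth fixed by \eqref{defconfree} must coincide with \eqref{defconnonp}. Making rigorous the passage from ``$\lambda$-periodic object with vanishing asymptotic expansion'' to ``identically zero'' is exactly where the analytic subtlety sits, and I would either restrict to solutions admitting a genuine Gevrey asymptotic expansion or import the $G$-uniqueness theorem wholesale to sidestep it.
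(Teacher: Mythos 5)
Your first two stages are correct and coincide with the paper's own argument: the paper verifies the difference equation exactly as you do, by telescoping $G(z+1)=\Gamma(z)G(z)$ to get $\log\Gamma(z+1)-\log\Gamma(z)=\log z=\log a-\log\lambda$ for the second difference of the $\log G$ piece, and observing that of the counterterms only the quadratic one contributes, $\frac{\log\lambda}{2\lambda^2}\cdot 2\lambda^2=\log\lambda$, so the total is $\log a=\partial_a^2 F^0(a)$; the asymptotic matching with \eqref{defconfree} is likewise how \eqref{defconnonp} is engineered (modulo sign slips in the paper's displayed expansion of $\log G$, which are not your doing).

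The genuine gap is uniqueness, and both of your proposed ways of closing it would fail. The paper itself defers this step to \cite{alim2021integrable}, and the argument there is the completion of precisely the reduction you begin --- the difference $D$ of two solutions solves the homogeneous equation, hence its first difference in the modulus is periodic --- closed off by the analytic conditions (domains of holomorphy in $(\lambda,t)$ and where the asymptotics/boundary condition are required to hold), not by convexity or by Gevrey theory. Your ``cleanest route'' via Bohr--Mollerup has no purchase on the problem: a competitor is a function of the two variables $(\lambda,a)$ constrained only by the second-order difference equation and an asymptotic condition as $\lambda\to 0$; nothing forces it to be of the form $\log\bigl(\text{function of }a/\lambda\text{ alone}\bigr)$ plus the fixed counterterms, nor to satisfy any real-convexity hypothesis, so the characterization \eqref{BarnesG} cannot even be applied to it. Concretely, $F_{np}(\lambda,a)+e^{-1/\lambda^2}$ is analytic, satisfies the difference equation of Theorem \ref{diffeq} (the added term is independent of $a$, so its second difference vanishes), and has the identical asymptotic expansion \eqref{defconfree} as $\lambda\to0$ in the sector $|\arg\lambda|<\pi/4$; neither the $G$-uniqueness theorem nor a bare Gevrey hypothesis excludes it, since such flat functions have vanishing Gevrey expansions in narrow sectors --- to rule them out one needs the expansion to hold with Gevrey bounds on a sector of opening greater than $\pi$ (a Watson/Nevanlinna--Sokal hypothesis), which is an additional assumption you never impose and which changes the statement. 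So you located the crux correctly (``periodic object with vanishing asymptotic expansion $\Rightarrow$ zero''), but the imports you propose do not deliver it; what is actually needed, and what the cited reference supplies, is a careful specification of the domains and sectors in which the solution is holomorphic and the asymptotics hold, after which the periodic ambiguity is killed by that analytic input rather than by convexity.
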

\begin{proof}
Using the functional equation of the Barnes $G$-function we obtain:
\begin{equation}
\begin{split}
&\log G\left(1+\frac{a+\lambda}{\lambda}\right)+ \log G\left(1+\frac{a-\lambda}{\lambda}\right) - 2 \log G\left(1+\frac{a}{\lambda}\right) \\
&= \log \Gamma\left(\frac{a}{\lambda}+1\right) - \log \Gamma \left( \frac{a}{\lambda}\right) = \log \frac{a}{\lambda}\,.
\end{split}
\end{equation}
From the additional terms in \eqref{defconnonp}, only the quadratic term in $a$ contributes to the r.h.s. of the difference equation. We obtain:
\begin{equation}
F_{np}(\lambda,a+\lambda) - 2 F_{np}(\lambda,a) + F_{np}(\lambda,a-\lambda)= \log a = \frac{\partial^2}{\partial a^2}\, F^0(a)\,.
\end{equation}
For a proof of the uniqueness of the results one may follow exactly the same reasoning as in \cite{alim2021integrable}.
\end{proof}

%%%%%%%%%%%%%%%%%%%%%%%%%%%%%%%%%%%%%%%%%%%%%%%%%%%%%
\subsection{Solution for the resolved conifold}
The solution of the difference equation for the resolved conifold was identified in \cite{alim2021integrable}, by adapting building blocks of Bridgeland's Tau function for the resolved conifold \cite{BridgelandCon}. The special functions in \cite{BridgelandCon} involve the multiple sine functions which are defined using the Barnes multiple Gamma functions \cite{Barnes}. For a variable $z\in \mathbb{C}$ and parameters $\omega_1,\ldots,\omega_r \in \mathbb{C}^{*}$ these are defined by:
\begin{equation}
    \sin_r(z\,|\, \omega_1,\dots,\omega_r):= \Gamma_{r}(z\, |\, \omega_1,\dots,\omega_r) \cdot \Gamma_{r}\left(\sum_{i=1}^r \omega_i - z\, |\, \omega_1,\dots,\omega_r\right)^{(-1)^r} \,,
\end{equation}
for further definitions, see e.~g.~ \cite{BridgelandCon,Ruijsenaars1} and references therein. We introduce furthermore  the generalized Bernoulli polynomials, defined by the generating function:
\begin{equation}
    \frac{x^r\, e^{zx}}{ \prod_{i=1}^r (e^{\omega_i x}-1)} = \sum_{n=0}^{\infty} \frac{x^n}{n!} \, B_{r,n}(z\,|\, \omega_1,\,\dots,\omega_r)\,.
\end{equation}

Consider now the function $G_3(z\,|\,\omega_1,\omega_2)$ of \cite[Sec. 4.2]{BridgelandCon}, defined by:\footnote{A subscript $3$ is added here to $G_3$ compared to \cite{BridgelandCon} to avoid confusion with the Barnes G-function.}
\begin{equation}\label{g3def}
    G_3(z\, | \, \omega_1,\omega_2) := \exp\left(\frac{\pi i}{6} \cdot B_{3,3}(z+\omega_1\,|\,\omega_1,\omega_1,\omega_2)\right) \cdot \sin_3(z+\omega_1\, |\, \omega_1,\omega_2,\omega_3),
\end{equation}
and define a function
\begin{equation}\label{resconfreedef}
    F_{np}(\lambda,t):= \log G_3(t\,|\,\check{\lambda},1)\,.
\end{equation}
It was shown in \cite{alim2021integrable} that $F_{np}$ is the unique solution of the difference equation with the boundary condition:
$$ \lim_{\lambda \rightarrow 0} \lambda^2  F_{np}(\lambda,t) = \tilde F^0(t)=\mathrm{Li}_3(e^{2\pi i t})\,,$$
and that moreover it has the following asymptotic expansion \cite{BridgelandCon,alim2021integrable}:
\begin{equation}
 F_{np} (\lambda,t) \sim \sum_{g=0}^{\infty} \lambda^{2g-2} \tilde{F}^{g}(t)\,,
\end{equation}
where $\tilde{F}^g(t)$ are the non-constant parts of the conifold free energies defined in \eqref{resconfree}.
%%%%%%%%%%%%%%%%%%%%%%%%%%%%%%%%%%%%%%%%%%%%%%%%%%%%%%

\section{Non-perturbative content of the solutions}\label{sec:nonpertcontent}

\subsection{Deformed conifold}
We start by discussing the non-perturbative content of topological string theory on the deformed conifold, whose structure is universal for topological strings near finite distance singularities as discussed in sec.~\ref{sec:boundary}. We therefore consider the Taylor series expansion of the solution of the difference equation as $\lambda\rightarrow\infty$ using the Taylor series expansion of the Barnes G-function, we obtain:\footnote{We have used $\zeta'(0)=-\frac{1}{2}\log 2\pi$.}
\begin{equation}
\begin{split}
Z_{np}(\lambda,t)&=\exp(F_{np}(\lambda,t))\\
&= \exp\left(\zeta'(-1)+\frac{1}{12}\log(\lambda)- \frac{1}{2}  \frac{a}{\lambda} - (1+\gamma+\log \lambda) \frac{a^2}{2 \lambda^2} \right) \\
& \times \exp \left( \sum_{n=3}^{\infty} (-1)^{n-1} \zeta(n-1) \frac{a^n}{\lambda^n n} \right) \,.
\end{split}
\end{equation}
It is expected that non-perturbative effects due to D-branes are signaled by a factor of $e^{-1/g_s}$ and effects due to NS5-branes come with factors of $e^{-1/g_s^2}$ \cite{Strominger:1990et,Callan:1991dj,Becker:1995kb}, see also \cite{Alexandrov:2011va} and references therein as well as \cite{Couso-Santamaria:2015hva} for a discussion in the resurgence and topological string context, where $g_s$ refers to the physical string theory coupling under consideration. It is perhaps reasonable to expect a similar structure in topological string theory using the topological string coupling $\lambda$. The discussion of which solitonic branes correspond to the factors requires a distinction between the A- and B-model which would see non-perturbative objects of type IIA or IIB string theory respectively. Recall that the result of the free energies of the deformed conifold at hand is a B-model result. Although its all-genus structure is very clear, the corresponding A-model geometry is perhaps less clear. It would be interesting however to understand the precise connection between the non-perturbative topological string theory content in this case and the metric obtained in \cite{Ooguri:1996me} by smoothening the conifold singularity through instanton effects. Such as smoothening was expected in \cite{Strominger:1995cz}.

We proceed here with a discussion of the non-perturbative content of the strong coupling expansion in the case of the explicit example of the quintic and its mirror in the next subsection.

\subsection{Topological string universality and the quintic example}
By the corollary \ref{diffarb}, the same difference equation as for the deformed conifold also holds for topological string theory on an arbitrary family of Calabi-Yau manifolds near a locus where finitely many states of the effective field theory become massless. To exemplify this we consider the quintic Calabi-Yau threefold, whose definition and mirror construction are reviewed in the appendix \ref{sec:quintic}. The singular conifold locus of the quintic and its mirror has been studied in many works, starting with \cite{Candelas:1990rm}. The expectation of the physical behavior of topological string theory near this singularity was in particular used in \cite{Huang:2006hq} to supplement the polynomial solution \cite{Yamaguchi:2004bt} of the holomorphic anomaly equations \cite{Bershadsky:1993cx} with boundary conditions.

In this case the good special coordinate in the B-model side of mirror symmetry is given by a ratio of two solutions of the Picard-Fuchs equation \eqref{PF}, when the latter is considered in the coordinate:
$$ \delta = \frac{1-3125z}{3125 z}\,,$$
the outcome of \cite{Huang:2006hq} is that:
\begin{equation}
t_c(\delta)= \delta -\frac{3}{10} \delta^2 + \mathcal{O}(\delta^3)\,.
\end{equation}
This special coordinate is interpreted as the mirror map and thus gives the mass of the object on the A-model side, which becomes massless in this limit. The object becoming massless in this case has D6 brane charge which was obtained by the analytic continuation studied in \cite{Candelas:1990rm} and revisited in \cite{Huang:2006hq} and more recently in \cite{Knapp:2016rec}. By the physical reasoning reviewed in sec.~\ref{sec:boundary}, it was thus expected in \cite{Huang:2006hq} that:
$$ F^g(t_c) =\frac{B_{2g}}{2g (2g-2) t_c^{2g-2}} + O(t^0_c)\,, \quad g>1$$
the behavior for genus $0,1$ was further determined \cite{Huang:2006hq}:
$$ F^0(t_c)= \frac{1}{2}t_c^2 \log t_c - \frac{3}{4} t_c^2 + \mathcal{O}(t_c^3) \,, \quad F^{1}(t_c)=-\frac{1}{12} \log t_c + \mathcal{O}(t_c).$$
Consider now $\Lambda \in \mathbb{C}^*$ and the rescaling:
$$ \lambda' = \lambda \cdot \Lambda\,, \quad  t_c' = t_c \cdot \Lambda\,,$$
and define:
$$ F'(\lambda',t_c') := \lim_{\Lambda \rightarrow \infty} \left( F(\lambda',t_c') + \frac{(t_c')^2}{2}\log \Lambda -\frac{1}{12}\log \Lambda\right)\,,$$
then all the ingredients of corollary \ref{diffarb} are met, and by the proof of that corollary $F'$ satisfies the difference equation \ref{diffeqarb}. We can thus define the non-perturbative completion for topological string theory on the quintic in the limit $\Lambda\rightarrow \infty$ to be given by $F_{np}(\lambda',t_c')$, where $F_{np}$ is defined in \eqref{defconnonp}.

For the strong coupling expansion we thus obtain:
\begin{equation}
\begin{split}
Z_{np}(\lambda',t') &= \exp\left(\zeta'(-1)+\frac{1}{12}\log(\lambda')- \frac{1}{2}  \frac{t_c'}{\lambda'} - (1+\gamma+\log \lambda) \frac{(t_c')^2}{2 (\lambda')^2} \right) \\
& \times \exp \left( \sum_{n=3}^{\infty} (-1)^{n-1} \zeta(n-1) \frac{(t'_c)^n}{(\lambda')^n n} \right) \,,
\end{split}
\end{equation}
one may interpret the $-1/\lambda'$ coefficient $t_c'/2$ as a D-brane instanton action. The expected coefficient of the NS5-brane is the volume of the CY, to understand why the $(t_c')^2$ is still plausible as the volume in this limit we recall that the volume of the CY as determined by the special geometry at large radius is typically given by a period of the mirror, which has the form\footnote{See e.~g.~\cite{Alimlectures} and references therein.}:
$$ \mathcal{V}\sim 2F_0 - tF_t\,,$$
in the limit $t_c\rightarrow 0$ we have indeed:
$$ 2F_0(t_c) - t_c \,F_{t_c} = - \frac{t_c^2}{2}\,,$$
although geometric interpretations such as the volume become less clear once the large volume regime of the moduli space is left. It would be interesting to interpret the higher order terms in this expansion.

\subsection{The resolved conifold}

\subsubsection{Strong coupling expansion}
The strong coupling expansion for the non-perturbative free energy of the resolved conifold is obtained from the asymptotic expansion of $\log G_3$, which is given in \cite[Prop. 4.8]{BridgelandCon}, it is given by:
\begin{equation}
F_{np}(\lambda,t)= -\frac{\zeta(3)}{2\pi^2} \check{\lambda} - \frac{\pi i}{12} \left( t-\frac{1}{2}\right) + \frac{\pi i}{\check{\lambda}^2} \frac{1}{3!} \left( t^3-\frac{3}{2}t^2 + \frac{t}{2}\right) + \sum_{k=1}^{\infty} \frac{B_{2k}(t)\cdot B_{2k-2}}{2k! \cdot (2\pi i)} \, \left(\frac{2\pi i}{\check{\lambda}}\right)^{2k-1}\,,
\end{equation}
valid for for $\lambda\rightarrow \infty$ and $\textrm{Im}\, t>0$ and where the Bernoulli polynomials $B_n(t)$ are defined by the generating function:
\begin{equation}
\frac{x e^{tx}}{e^x-1} = \sum_{n=0}^{\infty} B_n(t) \frac{x^n}{n!}\,,
\end{equation} 
this expansion leads to the following:
\begin{rem}
\begin{itemize}
\item It is interesting to note that the asymptotic expansion for $\lambda\rightarrow 0$ of the non-perturbative free energy is naturally expressed in terms of a $q=\exp(2\pi i t)$ expansion, hence corresponds to a large volume expansion. The asymptotic expansion for $\lambda \rightarrow \infty$ however is expressed in terms of $t$ and is moreover manifestly \emph{polynomial!}  in $t$ for every order in $\frac{1}{\check{\lambda}}$. This suggests that $\lambda$ and $t$ are not entirely independent expansion variables but should perhaps be thought to correspond to certain phases of the combined problem in $\lambda$ and $t$, in analogy to the phases of \cite{Witten:1993yc}.
\item Although the volume of the resolved conifold is infinite since it is a non-compact CY, it is interesting to observe that the $\frac{1}{\lambda^2}$ term in the expansion comes with a factor $\frac{1}{3!}t^3$, which is the classical CY volume expressed by the special geometry. It is then natural to speculate that this term corresponds to the NS5 brane contribution, to obtain a negative sign, a factor of $1/(2 \pi i)^3$ should be included in the volume identification.
\end{itemize}
\end{rem}
\subsubsection{Strong and weak coupling resummations}\label{sec:resummation}
For the resolved conifold we analyze the non-perturbative content of the solution of the difference equation. We obtain the following:
\begin{prop}\label{conresumprop} $F_{np}(\lambda,t)$ can be expressed as:
\begin{equation}
F_{np}(\lambda,t) = \sum_{k=1}^{\infty} \frac{q^k}{k(2\sin k\lambda/2)^2}  - \frac{\partial}{\partial \check{\lambda}} \left( \frac{\check{\lambda}}{4\pi}  \sum_{k=1}^{\infty} \frac{e^{2\pi i k (t-1/2)/\check{\lambda}}}{k^2 \sin (\pi k /\check{\lambda})} \right)\,, \quad \check{\lambda} =\frac{\lambda}{2\pi}\,.
\end{equation}
Moreover, this expression can be written as:
\begin{equation}\label{GVNS}
F_{np}(\lambda,t)= F_{GV}(\lambda,t) -  \frac{\partial}{\partial \check{\lambda}} \left(\check{\lambda} \, F_{NS} (1/\check{\lambda},(t-1/2)/\check{\lambda})\right)\,,
\end{equation}
where 
\begin{equation}
F_{GV}(\lambda,t)= \sum_{k=1}^{\infty} \frac{q^k}{k(2\sin k\lambda/2)^2}\,,
\end{equation}
is the Gopakumar-Vafa resummation of \eqref{GVresum} for the resolved conifold and
\begin{equation}
F_{NS}(\hbar,t) = \frac{1}{4\pi} \sum_{k=1}^{\infty} \frac{e^{2\pi i k t}}{k^2 \sin (\pi k \hbar)}\,,
\end{equation}
is the refined topological string free energy for the resolved conifold in the Nekrasov-Shatashvili limit as determined in \cite{Hatsuda:2015oaa,Hatsuda:2015owa} following \cite{Hatsuda:2013oxa}.
\end{prop}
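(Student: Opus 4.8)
The plan is to reduce the statement to the single displayed identity for $F_{np}(\lambda,t)$, since \eqref{GVNS} is only a change of notation: one recognises $\sum_{k\ge1} q^k/(k(2\sin(k\lambda/2))^2)$ as $F_{GV}(\lambda,t)$, the Gopakumar--Vafa resummation \eqref{GVresum} specialised to the single curve class of the resolved conifold (where $n^0_1=1$ and all other invariants vanish), and rewrites $\frac{\check{\lambda}}{4\pi}\sum_{k\ge1} e^{2\pi i k(t-1/2)/\check{\lambda}}/(k^2\sin(\pi k/\check{\lambda}))=\check{\lambda}\,F_{NS}(1/\check{\lambda},(t-1/2)/\check{\lambda})$ by reading off $\hbar=1/\check{\lambda}$ in the definition of $F_{NS}$. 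So the entire content lies in the first equation.

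First I would record two structural facts. Writing $F_{GV}$ for the first sum and $T:=-\partial_{\check{\lambda}}(\cdots)$ for the second, the identities $q^k\mapsto q^k e^{\pm i k\lambda}$ under $t\mapsto t\pm\check{\lambda}$ and $e^{i k\lambda}-2+e^{-i k\lambda}=-(2\sin(k\lambda/2))^2$ show at once that $F_{GV}$ solves the resolved-conifold difference equation of Theorem \ref{diffeq} with right-hand side $(\tfrac{1}{2\pi}\partial_t)^2\mathrm{Li}_3(q)=\log(1-q)$. A parallel manipulation, using that $e^{2\pi i k(t-1/2)/\check{\lambda}}$ is invariant under $t\mapsto t\pm\check{\lambda}$ while the prefactor $\partial_{\check{\lambda}}$ produces a nonzero but second-difference-free obstruction, shows that $T$ solves the \emph{homogeneous} difference equation. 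Hence $F_{GV}+T$ obeys the same difference equation as $F_{np}=\log G_3$. Moreover $\lim_{\lambda\to0}\lambda^2 F_{GV}=\sum_{k\ge1}q^k/k^3=\mathrm{Li}_3(q)$ while $T$ is exponentially small as $\lambda\to0$, so $F_{GV}+T$ also satisfies the weak-coupling boundary condition characterising the solution $F_{np}=\log G_3$ of \eqref{resconfreedef}.

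Because $F_{GV}$ by itself already solves the difference equation with this weak-coupling limit, the weak-coupling data alone cannot distinguish $\log G_3$ from $F_{GV}$; the equality must be pinned down at strong coupling. The central step is therefore to compute the $\lambda\to\infty$ asymptotics of $F_{GV}+T$ and match it to Bridgeland's expansion in \cite[Prop. 4.8]{BridgelandCon} quoted above. For $\mathrm{Im}\,t>0$ the term $F_{GV}$ is exponentially suppressed and contributes nothing to the asymptotic series, so $T$ must carry the entire perturbative strong-coupling content. Expanding $1/\sin(\pi k/\check{\lambda})$ for $\check{\lambda}\to\infty$ collapses the inner sum to $\tfrac{\check{\lambda}^2}{4\pi^2}\mathrm{Li}_3(e^{2\pi i(t-1/2)/\check{\lambda}})$ at leading order, and the behaviour of $\mathrm{Li}_3$ near argument $1$ (equivalently an Euler--Maclaurin or $\zeta$-regularised treatment of the $k$-sum) generates the constant $\zeta(3)$ together with the Bernoulli-polynomial-in-$t$ corrections; applying $-\partial_{\check{\lambda}}$ then reproduces \cite[Prop. 4.8]{BridgelandCon} term by term, the leading check being $-\partial_{\check{\lambda}}(\tfrac{\check{\lambda}^2}{4\pi^2}\zeta(3))=-\tfrac{\zeta(3)}{2\pi^2}\check{\lambda}$.

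I expect the main obstacle to be twofold. The harder computational point is this confluent resummation: since the period $\check{\lambda}$ enters the triple-sine data of $G_3$ with multiplicity two (periods $(\check{\lambda},\check{\lambda},1)$ in \eqref{g3def}, \eqref{resconfreedef}), the dual series is genuinely a $\partial_{\check{\lambda}}$-derivative, and recovering the full polynomial-in-$t$ series to all orders---not merely its leading term---requires controlling the small-argument expansion of $\mathrm{Li}_3$ together with the regularised sum over $k$ uniformly. The more conceptual point is the uniqueness that closes the argument: once the $\lambda\to\infty$ asymptotics are matched, one must invoke that $\log G_3$ is the unique solution of the difference equation with this asymptotic expansion and the stated analyticity, exactly as established for this setting in \cite{alim2021integrable}, to conclude $F_{np}=F_{GV}+T$. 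Alternatively, one may sidestep uniqueness entirely by starting from an exact integral (residue) representation of $\log G_3$ and closing the contour: the poles associated with the period $1$ resum to $F_{GV}$, while the double poles associated with the repeated period $\check{\lambda}$ resum to $T$.
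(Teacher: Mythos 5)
Your structural facts are all correct: $F_{GV}$ solves the inhomogeneous difference equation with right-hand side $\log(1-q)$, and $T:=-\partial_{\check{\lambda}}\bigl(\tfrac{\check{\lambda}}{4\pi}\sum_k\cdots\bigr)$ solves the homogeneous one (the non-commutation of $\partial_{\check{\lambda}}$ with the shifts $t\mapsto t\pm\check{\lambda}$ produces first-difference terms $\mp\partial_t g$ that cancel in the second difference). The genuine gap is in the step meant to close the argument. The uniqueness result of \cite{alim2021integrable} is a \emph{weak-coupling} statement, and its hypotheses include holomorphy of the solution in a region containing the positive real $\check{\lambda}$-axis together with the perturbative expansion as $\lambda\to 0$ there; it is exactly this holomorphy condition that excludes $F_{GV}$ (which has dense poles at $\check{\lambda}\in\mathbb{Q}$), not the asymptotics alone. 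It therefore cannot be invoked ``exactly as established'' for a matching done at $\lambda\to\infty$: a strong-coupling uniqueness theorem would have to be formulated and proved from scratch, and it is delicate for precisely the reason you yourself identify at weak coupling — an asymptotic series is blind to exponentially small homogeneous solutions. Concretely, $F_{np}$ and $T$ alone share the same strong-coupling expansion for $\mathrm{Im}\,t>0$ and $\lambda\to\infty$ off the real axis (where $F_{GV}$ decays; on the real axis it merely oscillates, so even the sector of validity of your matching needs care), yet they differ by $F_{GV}\neq 0$; likewise $e^{-\check{\lambda}}e^{2\pi i t/\check{\lambda}}$ solves the homogeneous equation and is beyond all orders at both ends of the positive real axis. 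Excluding such terms requires a Phragm\'en--Lindel\"of-type argument on a precisely delimited domain, and the natural domain (a neighborhood of the positive real $\check{\lambda}$-axis) is one where $F_{GV}$ and $T$ are separately singular and only their sum is regular — a cancellation whose proof is comparable in difficulty to the identity itself. As written, the main route does not close.

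Your final sentence, offered as a way to ``sidestep uniqueness entirely,'' is in fact the paper's actual proof, and it is where the effort should go: one starts from the integral representation $F_{np}(\lambda,t)=-\int_{\mathcal{C}}\frac{e^{(t+\check{\lambda})s}}{(e^s-1)(e^{\check{\lambda}s}-1)^2}\frac{ds}{s}$ of \cite[Prop. 4.2]{BridgelandCon}, based on \cite{Narukawa}, valid for $\mathrm{Re}\,\check{\lambda}>0$ and $-\mathrm{Re}\,\check{\lambda}<\mathrm{Re}\,t<\mathrm{Re}\,(\check{\lambda}+1)$, closes the contour in the upper half-plane, and computes residues: the simple poles at $s=2\pi i k$ resum to $F_{GV}$, while the double poles at $s=2\pi i k/\check{\lambda}$, evaluated via Cauchy's generalized integral formula, give exactly $T$ after substitution; assuming $\mathrm{Im}\,\lambda\neq 0$ or $\check{\lambda}\notin\mathbb{Q}$ keeps the two families of poles distinct and avoids triple poles. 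This argument is exact, requires no uniqueness input and no all-orders control of the strong-coupling expansion. To turn your proposal into a proof, promote this one-line remark to the main argument and carry out the residue computation.
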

\begin{proof}
To prove this, we use the integral representation of the function $G_3$, discussed in \cite[Prop. 4.2]{BridgelandCon}, based on \cite[Prop. 2]{Narukawa} obtaining:
\begin{equation}
F_{np}(\lambda,t) = - \int_{\mathcal{C}} \frac{e^{(t+\check{\lambda})s}}{(e^s-1)(e^{\check{\lambda}s}-1)^2 } \frac{ds}{s}\,,
\end{equation}
which is valid for $\textrm{Re}\check{\lambda} >0$ and $-\textrm{Re}\check{\lambda} < \textrm{Re} t < \textrm{Re} (\check{\lambda}+1)$ and the contour $\mathcal{C}$ is following the real axis from $-\infty$ to $\infty$ avoiding $0$ by a small detour in the upper half plane.

To find the series expression in $\lambda$ including the perturbative and non-perturbative pieces from the integral representation we close the contour in the upper half plane and analyze the residues. In the upper half plane without zero, the integrand has two infinite sets of poles given by:
\begin{equation}
s= 2\pi i k \,, \quad \textrm{and} \quad s=\frac{2\pi i k }{\hat{\lambda}}\, \quad k \in \mathbb{N}\setminus\{0\},
\end{equation}
the first set corresponds to simple poles and the second set corresponds to double poles. To avoid higher order poles we may assume that either $\textrm{Im} \lambda \ne 0$ or that $\check{\lambda} \notin \mathbb{Q}$.\footnote{We note that there is nothing wrong with $\hat{\lambda}\in \mathbb{Q}$, it just requires a separate analysis of the poles of order three.}

\usetikzlibrary{calc,decorations.markings}
\begin{figure}
\begin{centering}
\begin{tikzpicture}
\draw (0,-0.5) -- (0,5.5);  % Axis
\draw (-5.5,0) -- (5.5,0);   
%\foreach \y in {1,...,4} {
 % \draw (-4pt,\y) -- (4pt,\y) node[pos=0,left] {\y $2\pi i$  };
 % }
  \foreach \y in {0,...,4} {
  \node at (0,\y) {$\times$};
  };
  \foreach \y in {1,...,4} {
  \node at (-0.6,\y) {\y $ \cdot \, 2\pi i$  };
    };
  \foreach \y in {1,...,5} {
  \node at (2*\y/5,4*\y/5) {$\times$};
  };
   \foreach \y in {1,...,4} {
   \node at (2*\y/5+1,4*\y/5) {\y $ \cdot \, 2\pi i/\check{\lambda}$  };
    };

%  \node at (0,0) {$\times$}
%\foreach \y in {-5,...,5} {
  %\draw (\y,-4pt) -- (\y,4pt) node[pos=0,below] {\y};
%}

%\node at (0,0) {$\times$}; % Pole

% Contour line
\draw[thick,red,xshift=0pt,
decoration={ markings,  % This schema allows for fine-tuning the positions of arrows
      mark=at position 0.2 with {\arrow{latex}}, 
      mark=at position 0.6 with {\arrow{latex}},
      mark=at position 0.8 with {\arrow{latex}}, 
      mark=at position 0.98 with {\arrow{latex}}}, 
      postaction={decorate}]
  (-5,0) -- (-0.5,0) arc (180:0:.5) -- (5,0);
\draw[thick,red,xshift=0pt,
decoration={ markings,
      mark=at position 0.2 with {\arrow{latex}}, 
      mark=at position 0.4 with {\arrow{latex}},
      mark=at position 0.6 with {\arrow{latex}}, 
      mark=at position 0.8 with {\arrow{latex}}}, 
      postaction={decorate}]
 (5,0) arc (0:180:5) -- (-5,0);
\end{tikzpicture}

\end{centering}
\caption{Illustration of the simple and double poles as well as the contour $\mathcal{C}$.}
\end{figure}
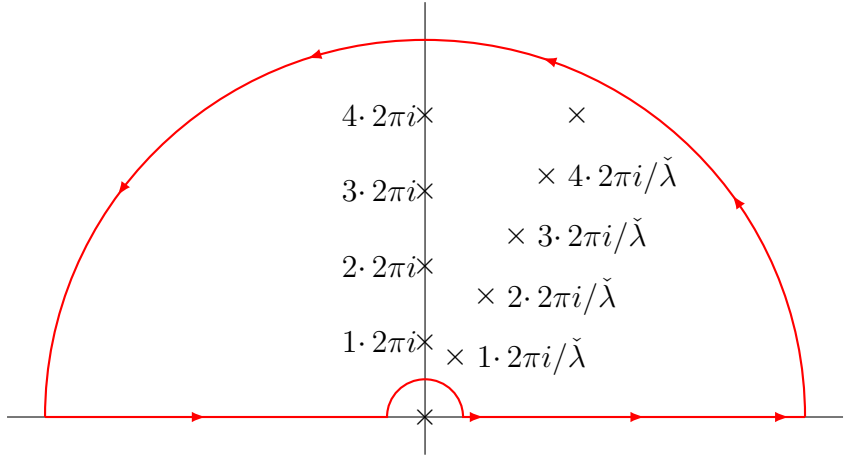

The contribution of the simple poles gives the first factor of the r.h.s. of the proposition. We denote the contribution of the double poles by $I_{db}$ and obtain from Cauchy's generalized integral formula, introducing $s'=\check{\lambda}\, s$:
\begin{equation}
I_{db}= 2\pi i \cdot \sum_{k=1}^{\infty} \frac{d}{ds'} f(s')|_{s'=2\pi i k}\,,
\end{equation}
where $f(s')= -\frac{e^{ts'/\check{\lambda}}}{s'(e^{s'/\check{\lambda}}-1)}$, which gives:
\begin{equation}
I_{db}= -\frac{1}{4\pi} \sum_{k=1}^{\infty} \frac{(1-2\pi i k t/\check{\lambda}) e^{2\pi i k (t-1/2)/\check{\lambda}}}{k^2 \sin (\pi k/\check{\lambda})} -\frac{1}{\check{\lambda}}   \sum_{k=1}^{\infty} \frac{ e^{2\pi i k t/\check{\lambda}}}{k (2\sin (\pi k/\check{\lambda}))^2}\,,
\end{equation}
the reformulation of the result follows after some substitution algebra.
\end{proof}

\begin{rem}
\begin{itemize}
\item The result \eqref{GVNS} matches the result of eq.~(5.6) of \cite{Hatsuda:2015owa} which was obtained by a generalized Borel resummation. There are differences of various relative factors of $2\pi i$ compared to  \cite{Hatsuda:2015owa}. These originate from our definition of $q=e^{2\pi i t}$ (opposed to $q=e^{-t}$, which is often used). Including the $2\pi i$ factors corresponds to the natural choice of variable giving convergence of $q$ series at large volume $\textrm{Im} t \rightarrow \infty$ as well as having the periodicity in shifts of the $B-$field $t\rightarrow t+1$. Indeed, this makes the $1/2$ unit of $B$-field shift in the argument of $F_{NS}$, which was discussed in \cite{Hatsuda:2013oxa,Hatsuda:2015oaa,Hatsuda:2015owa} very clear.
\item Since this computation reproduces results which agree with \cite{Hatsuda:2013oxa}, one may consult \cite[Sec. 4.2]{Hatsuda:2013oxa} for comments and comparison to the proposal of \cite{Lockhart:2012vp}. 
\item A similar computation using integral representations occurring in Chern-Simons theory was done in \cite{Krefl:2015vna}. 
\end{itemize}
\end{rem}
We conclude this subsection by noting that the special function determined by the difference equation does indeed contain the non-perturbative structure of the resolved conifold which was obtained by various different methods in \cite{Hatsuda:2015oaa,Hatsuda:2015owa,Krefl:2015vna}. This in particular confirms the expectation of \cite{BridgelandCon}, that the Tau function of the Riemann-Hilbert problem associated to wall-crossing of DT invariants of the resolved conifold defines a non-perturbative topological string partition function.

%%%%%%%%%%%%%%%%%%%%%%%%%%%%%%%%%%%%%%%%%%%%%%%%%%%%%%

\subsection{Non-perturbative conifold and non-commutative DT}

With the non-perturbative topological string partition function at hand one can revisit the relation between topological strings and the generating function of non-commutative Donaldson-Thomas invariants studied in \cite{Szendroi}. We introduce therefore:\footnote{Compared to \cite{Szendroi}, $q_\lambda$ has a different sign and $2\pi i$ factors are added to $t$.}
\begin{equation}\label{ncdt}
Z_{\textrm{NCDT}}(\lambda,t):= M(q_{\lambda})^2 \, \prod_{k=1}^{\infty} \left(1-  e^{-2\pi i t} q_{\lambda}^k\right)^k \prod_{k=1}^{\infty} \left(1-  e^{2\pi i t} q_{\lambda}^k\right)^k\,,
\end{equation}
where the MacMahon function is given by:
\begin{equation}
M(q)= \prod_{k=1}^{\infty} (1-q^k)^{-k}\,,
\end{equation}
and $q_{\lambda}=e^{i\lambda}\,.$
Defining the non-perturbative topological string free partition function:
\begin{equation}\label{nonppart}
Z_{np}(\lambda,t):= \exp (F_{np}(\lambda,t))\,,
\end{equation}
as the exponential of \eqref{resconfreedef}, we obtain the following:
\begin{prop}
The relation between the generating function of non-commutative DT invariants \eqref{ncdt} and the non-perturbative topological string partition function \eqref{nonppart} is given by:
\begin{equation}
Z_{\textrm{NCDT}}(\lambda,t) =  M(q_{\lambda})^2\,\cdot Z_{np}(\lambda,t+1)\, \cdot Z_{np}(-\lambda,-t)\,,
\end{equation}
\end{prop}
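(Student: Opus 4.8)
The plan is to cancel the common MacMahon factor $M(q_{\lambda})^2$, which appears explicitly on both sides, and reduce the claim to the logarithmic identity
\[
F_{np}(\lambda,t+1)+F_{np}(-\lambda,-t) = \log\left(\prod_{k=1}^{\infty}(1-e^{-2\pi i t}q_{\lambda}^k)^k \prod_{k=1}^{\infty}(1-e^{2\pi i t}q_{\lambda}^k)^k\right).
\]
First I would evaluate the right-hand side directly: taking $\log$, expanding $\log(1-x)=-\sum_{n\ge1}x^n/n$, swapping sums and using $\sum_{k\ge1}kx^k=x/(1-x)^2$ with $x=q_{\lambda}^n$, together with the elementary identity $q_{\lambda}^n/(1-q_{\lambda}^n)^2=-1/(2\sin(n\lambda/2))^2$ (immediate from $q_{\lambda}=e^{i\lambda}$), one gets $\log\prod_{k}(1-Q\,q_{\lambda}^k)^k=\sum_{n\ge1}Q^n/(n\,(2\sin(n\lambda/2))^2)$. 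Setting $Q=e^{\pm2\pi i t}$ identifies the right-hand side with $F_{GV}(\lambda,t)+F_{GV}(\lambda,-t)$, where $F_{GV}$ is the Gopakumar--Vafa resummation of Proposition \ref{conresumprop}.

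Next I would expand the left-hand side using the decomposition $F_{np}=F_{GV}-\partial_{\check{\lambda}}(\check{\lambda}\,F_{NS}(1/\check{\lambda},(t-1/2)/\check{\lambda}))$ established in Proposition \ref{conresumprop}; write $N(\lambda,t)$ for the second (NS) summand. The $F_{GV}$ pieces behave exactly as needed: since $q^k=e^{2\pi i kt}$ is $1$-periodic in $t$, we have $F_{GV}(\lambda,t+1)=F_{GV}(\lambda,t)$, and since $\sin^2$ is even, $F_{GV}(-\lambda,-t)=F_{GV}(\lambda,-t)$. Hence the $F_{GV}$ contributions to $F_{np}(\lambda,t+1)+F_{np}(-\lambda,-t)$ already reproduce $F_{GV}(\lambda,t)+F_{GV}(\lambda,-t)$, so the entire identity reduces to the vanishing of the NS parts, $N(\lambda,t+1)+N(-\lambda,-t)=0$.

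The main obstacle, and the heart of the proof, is this NS cancellation, which must be handled carefully because the substitutions $\lambda\to-\lambda$, $t\to-t$ act both on the summand and through the outer $\partial_{\check{\lambda}}$ derivative (recall $\check{\lambda}\to-\check{\lambda}$ under $\lambda\to-\lambda$, since $\check{\lambda}=\lambda/2\pi$). The strategy is to set $P(\mu):=\frac{\mu}{4\pi}\sum_{k\ge1}e^{2\pi i k(t+1/2)/\mu}/(k^2\sin(\pi k/\mu))$, so that directly $N(\lambda,t+1)=P'(\check{\lambda})$. Writing $\Psi(\nu)$ for the inner function appearing in $N(-\lambda,-t)$ before differentiation, one checks --- using that $\sin$ is odd and that $(-t-1/2)/(-\check{\lambda})=(t+1/2)/\check{\lambda}$ --- that $\Psi(\nu)=P(-\nu)$. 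Since $N(-\lambda,-t)=\Psi'(-\check{\lambda})$ and $\Psi'(\nu)=-P'(-\nu)$, this gives $N(-\lambda,-t)=-P'(\check{\lambda})$, and the two NS terms cancel. I would carry out this sign bookkeeping explicitly, as it is the only place where errors can creep in.

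Finally, I would record the domains of validity (for instance $\mathrm{Im}\,t>0$ for the $q$-series defining $F_{GV}$, and $\mathrm{Im}\,\lambda\ne0$ or $\check{\lambda}\notin\mathbb{Q}$ for the NS series, as already used in Proposition \ref{conresumprop}) so that every rearrangement of sums is justified on a common open domain; the claimed equality of $Z_{\textrm{NCDT}}$ and the product on the right then holds as an identity of analytic functions, extending by analytic continuation wherever both sides are defined.
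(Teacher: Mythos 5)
Your proposal is correct, but it takes a genuinely different route from the paper's. The paper's own proof is essentially a citation: it invokes Bridgeland's functional equation for $G_3$ (\cite[Prop. 4.3]{BridgelandCon}), $G_{3}(z+\omega_2\,|\,\omega_1,\omega_2)\cdot G_3(z\,|\,\omega_1,-\omega_2)=\prod_{k\ge1}(1-x_2 q_2^k)^k\prod_{k\ge1}(1-x_2^{-1}q_2^k)^k$, applied at $(z,\omega_1,\omega_2)=(t,\check{\lambda},1)$, together with the scaling invariance of $G_3$, which converts $G_3(t\,|\,\check{\lambda},-1)$ into $Z_{np}(-\lambda,-t)$. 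You instead bypass that identity and derive the product formula from the paper's own resummation, Proposition \ref{conresumprop}: your elementary computation $\log\prod_{k}(1-Q\,q_{\lambda}^k)^k=\sum_{n\ge1}Q^n/(n\,(2\sin(n\lambda/2))^2)$ correctly identifies the right-hand side with $F_{GV}(\lambda,t)+F_{GV}(\lambda,-t)$; the GV pieces of $F_{np}(\lambda,t+1)+F_{np}(-\lambda,-t)$ reproduce exactly this by periodicity and evenness; and your sign bookkeeping for the NS pieces ($\Psi(\nu)=P(-\nu)$, hence $N(-\lambda,-t)=-P'(\check{\lambda})$) checks out, so they cancel. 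This buys a structural insight that the paper's citation hides: the combination $Z_{np}(\lambda,t+1)\,Z_{np}(-\lambda,-t)$ is precisely the one in which the non-perturbative (NS) parts of \eqref{GVNS} cancel, leaving only the GV product, which is why $Z_{\textrm{NCDT}}$ is built solely from GV-type factors. The price is the domain bookkeeping you defer to the end, and one point there deserves to be made explicit: the contour-integral derivation of Proposition \ref{conresumprop} requires $\textrm{Re}\,\check{\lambda}>0$, so it applies as proved at $(\lambda,t+1)$ only when $\textrm{Re}\,\check{\lambda}>0$ and at $(-\lambda,-t)$ only when $\textrm{Re}\,\check{\lambda}<0$; there is no common domain on which both instances hold directly. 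One must first extend the decomposition \eqref{GVNS} by analytic continuation in $\check{\lambda}$ (say through the region $\textrm{Im}\,\check{\lambda}>0$, where all the series involved converge and $G_3$ stays analytic) before combining the two evaluations. Your closing appeal to analytic continuation does cover this, but it is doing more work than merely ``justifying rearrangements of sums,'' and a careful write-up should say so.
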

\begin{proof}
We consider the following property, proved in \cite[Prop. 4.3]{BridgelandCon} for the function $G_3$ defined in \eqref{g3def}:
\begin{equation}
G_{3}(z+\omega_2\, |\, \omega_1,\omega_2) \cdot G_3(z\, | \, \omega_1,-\omega_2)= \prod_{k=1}^{\infty} (1-x_2\, q_2^k)^k\cdot  \prod_{k=1}^{\infty} (1-x_2^{-1}\, q_2^k)^k\,,
\end{equation}
valid for $\textrm{Im}(\omega_1/\omega_2)>0$ and $z\in \mathbb{C}$, where:
$$x_2=\exp (2\pi i z/\omega_2)\,, \quad q_2=\exp (2\pi i \omega_1/\omega_2)\,. $$
The claim of the proposition follows by considering the property for $G_{3}(t\, | \check{\lambda},1)$ which defined $F_{np}$ and further noting that the function $G_3$ is invariant under simultaneous rescaling of all three arguments, see e.~g.~\cite[Prop. 4.2]{BridgelandCon}.
\end{proof}
\begin{rem}
We remark that relations between $Z_{\textrm{NCDT}}$ and the partition function for the resolved conifold obtained by exponentiating the GV resummation were already observed in \cite{Szendroi} and interpreted physically in e.~g.~\cite{Jafferis:2008uf,Aganagic:2009kf}. The result obtained here is however exact and non-perturbative and therefore perhaps a little more surprising since proposition \ref{sec:resummation} shows that the non-perturbative free energy contains more information than the GV resummation.
\end{rem}

%%%%%%%%%%%%%%%%%%%%%%%%%%%%%%%%%%%%%%%%%%%%%%%%%%%%%%%%

\subsection{Conifold locus of the resolved conifold}
A natural question to ask is about the precise relation between the free energies of the resolved and deformed conifold geometries. Relatedly, the free energy of the resolved conifold should be the easiest example to explicitly test the universal behavior of topological string theory near conifold type singularities. Since the resolved conifold corresponds to the small resolution of the conifold singularity, the singular locus corresponds to the locus where $t$, the K\"ahler parameter of the $\mathbb{P}^1$ shrinks to zero $t\rightarrow 0$. Since the free energy of the resolved conifold \eqref{resconfree} is given as a series expansion in $q=e^{2\pi i t}$, which does not converge as $t\rightarrow 0$ one must  analytically continue this result. We will prove the following:
\begin{thm}\label{conanalytic}
For $g >1$ the analytic continuation of the genus $g$ free energy of the resolved conifold to the regime $t\rightarrow 0$ is given by:

\begin{equation}
\begin{split}
\tilde{F}^g(t)&= \frac{B_{2g}}{2g(2g-2)} \frac{1}{(2\pi t)^{2g-2}} + (-1)^g \frac{B_{2g} B_{2g-2}}{2g (2g-2)(2g-2)!} \\
 &+ \frac{B_{2g}}{2g(2g-2)!} \sum_{d=g}^{\infty} (-1)^{d+1} \frac{B_{2d}}{2d(2d-2g+2)!} (2\pi t)^{2d-2g+2}\,.
\end{split}
\end{equation}
\end{thm}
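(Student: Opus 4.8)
The plan is to start from the closed form of the genus $g$ free energy recorded in \eqref{resconfree}, namely $\tilde{F}^g(t) = \frac{(-1)^{g-1} B_{2g}}{2g(2g-2)!}\,\mathrm{Li}_{3-2g}(q)$ with $q = e^{2\pi i t}$, and to produce the claimed expansion purely as the analytic continuation of the polylogarithm to $q\to 1$. Since the defining Dirichlet series diverges at $t=0$, the central tool I would invoke is the Jonqui\`ere--Lerch inversion formula
\begin{equation}
\mathrm{Li}_s(e^{\mu}) = \Gamma(1-s)\,(-\mu)^{s-1} + \sum_{k=0}^{\infty} \frac{\zeta(s-k)}{k!}\,\mu^{k}\,,
\end{equation}
valid for $|\mu| < 2\pi$ and $s \notin \{1,2,3,\dots\}$, taken at $\mu = 2\pi i\,t$ and $s = 3-2g$. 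This furnishes a convergent expansion on a punctured neighborhood of $t=0$ and is exactly the analytic continuation the theorem asks for.

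The first point to verify is that the formula applies verbatim for our values. Since $s = 3-2g$ is a negative (odd) integer for $g\ge 2$, one has $\Gamma(1-s) = \Gamma(2g-2) = (2g-3)!$ finite, and none of the anomalous logarithmic terms that plague positive integer order are present, so no regularization is needed. For the singular term I would combine $\Gamma(2g-2)(-\mu)^{2-2g}$ with the overall prefactor, using $(2g-2)! = (2g-2)(2g-3)!$ together with the fact that $2-2g$ is an even integer, so that $(-\mu)^{2-2g}=\mu^{2-2g}$ carries no branch ambiguity; extracting the phases via $(2\pi i t)^{2-2g} = (-1)^{g+1}(2\pi t)^{2-2g}$ then reproduces precisely $\frac{B_{2g}}{2g(2g-2)}(2\pi t)^{2-2g}$.

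For the regular part I would note that every argument $3-2g-k$ is a negative integer, so $\zeta(3-2g-k) = -\frac{B_{2g+k-2}}{2g+k-2}$, and that the trivial zeros $\zeta(-2m)=0$ ($m\ge1$) annihilate every odd $k$ (because $3-2g$ is odd); writing $k=2j$ and using $\mu^{2j}=(-1)^{j}(2\pi t)^{2j}$ leaves a series in even powers of $t$ only. Reindexing by $d=g-1+j$ then separates the $j=0$ contribution, which is the $t^{0}$ term and should match $(-1)^{g}\frac{B_{2g}B_{2g-2}}{2g(2g-2)(2g-2)!}$, from the $j\ge1$ tail, which after collecting signs ($(-1)^{g-1}(-1)^{j}(-1)=(-1)^{d+1}$) becomes the stated sum $\frac{B_{2g}}{2g(2g-2)!}\sum_{d\ge g}(-1)^{d+1}\frac{B_{2d}}{2d(2d-2g+2)!}(2\pi t)^{2d-2g+2}$.

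The only genuinely delicate part is the sign and $2\pi i$ bookkeeping required to convert the complex expansion in $\mu$ into the manifestly real form of the statement; this is routine but unforgiving, and I would track the parity of $g$, $j$ and $d$ carefully at each step. A worthwhile independent cross-check, which I would run at least for small $g$, is to use that $\mathrm{Li}_{3-2g}$ is a rational function of $q$, obtained by applying $q\,\frac{d}{dq}$ repeatedly to $\frac{q}{1-q}$, and to expand it directly about $q=1$; agreement of the leading pole and the constant term there confirms both the applicability of the inversion formula at negative integer order and the correctness of the combinatorial reindexing.
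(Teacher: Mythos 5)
Your proposal is correct, but it reaches the result by a genuinely different route than the paper. You invoke the classical Jonqui\`ere--Lerch inversion formula for $\mathrm{Li}_s(e^{\mu})$ at the negative integer order $s=3-2g$, and your bookkeeping is sound: $(-\mu)^{2-2g}=\mu^{2-2g}$ since the exponent is even, $i^{2-2g}=(-1)^{g+1}$ gives the leading term, $\zeta(3-2g-k)=-B_{2g+k-2}/(2g+k-2)$ with the trivial zeros killing odd $k$, and the reindexing $d=g-1+j$ with total sign $(-1)^{g-1}\cdot(-1)\cdot(-1)^{j}=(-1)^{d+1}$ reproduces all three pieces of the statement. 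The paper instead gives a self-contained elementary derivation: it writes
\begin{equation*}
\tilde{F}^g(t)=\frac{(-1)^{g-1}B_{2g}}{2g(2g-2)!}\,\theta_q^{2g-3}\,\mathrm{Li}_0(q)\,,\qquad \theta_q=q\,\frac{d}{dq}\,,
\end{equation*}
substitutes $a=2\pi i t$ so that $\theta_q=\partial_a$ and $\mathrm{Li}_0(q)$ becomes (up to a constant annihilated by the derivatives) $\frac{1}{a}\cdot\frac{a}{e^a-1}$, expands via the Bernoulli generating function, and differentiates term by term; the singular, constant, and polynomial terms then come respectively from the $1/a$ piece, the $a^{2g-3}$ piece, and the higher even powers. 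Amusingly, the ``independent cross-check'' you propose --- applying $q\,d/dq$ repeatedly to $q/(1-q)$ and expanding about $q=1$ --- \emph{is} the paper's proof. Your route buys concision and makes the analytic-continuation status of the expansion manifest (the cited formula is valid for $|\mu|<2\pi$ and $s\notin\{1,2,3,\dots\}$, so no regularization is needed); the paper's route buys self-containedness, using nothing beyond the Bernoulli generating function and termwise differentiation. The two are tied together by $\zeta(-n)=-B_{n+1}/(n+1)$, which is exactly how the zeta values in your expansion become the Bernoulli numbers in the paper's.
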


\begin{proof}
We begin by representing $\tilde{F}^g(t)$ for the resolved conifold as in \eqref{derrep}:
\begin{equation}
\tilde{F}^g(t)=\frac{(-1)^{g-1}B_{2g}}{2g (2g-2)!} \,\theta_q^{2g-3} \textrm{Li}_0(q)\,, \quad g\ge1\,,
\end{equation}
starting from $\textrm{Li}_0(q)=\frac{q}{(1-q)}$ and using the property:
\begin{equation} \label{polylogder}
\theta_q \textrm{Li}_s(q) =\textrm{Li}_{s-1} (q)\,,  \quad \theta_q:= q \,\frac{d}{dq}\,.
\end{equation}
We introduce $a=2\pi i t$, the expression becomes:
\begin{equation}
\begin{split}
\tilde{F}^g(a)&=\frac{(-1)^{g}B_{2g}}{2g (2g-2)!} \,\partial_a^{2g-3} \left(\frac{1}{a} \left( \frac{a}{e^a-1}\right)\right) \,\\&=\frac{(-1)^{g}B_{2g}}{2g (2g-2)!} \,\partial_a^{2g-3}  \left( \sum_{n=0}^{\infty} B_n \frac{a^{n-1}}{n!}\right)\\
&= \frac{(-1)^g B_{2g}}{2g(2g-2) a^{2g-2}} + (-1)^g \frac{B_{2g} B_{2g-2}}{2g (2g-2) (2g-2)!} \\&+ (-1)^{g-1} \frac{B_{2g}}{2g(2g-2)!} \sum_{d=g}^{\infty} \frac{B_{2d}}{2d(2d-2g+2)!} a^{2d-2g+2}\,,
\end{split}
\end{equation}
where in the first and second line the expression $Li_0(q)$ was brought into the form of the generating function of Bernoulli numbers. The result on the third line follows from considering separately the differential operator acting on the singular piece $1/a$ of the r.h.s., the constant piece comes from the differential operator acting on the term  with power $a^{2g-3}$, the rest is collecting all the higher powers. We have furthermore used the vanishing of $B_{2n+1}, n>0$. The statement of the theorem follows by substituting back $a=2\pi i t$.

\end{proof}

\begin{rem}
\begin{itemize}
\item This result is a mathematical proof of the gap condition used in \cite{Huang:2006si,Huang:2006hq}, based on the physical expectation outlined in sec.~\ref{sec:boundary}, as a condition imposed on the local behavior of topological string theory on arbitrary families of CY manifolds. In the case at hand the result does not serve any computational purpose since the full exact and non-perturbative expression for topological string theory was already given, it is nevertheless gratifying to see the conjectured behavior to hold mathematically rigorously in this example.
\item We think the result should have an interesting interpretation in enumerative geometry. The leading singular term gives the Euler characteristic of genus $g$ Riemann surfaces proved in \cite{HZ}, in the constant term, the contributions from constant maps determined in \cite{Faber} make a somewhat surprising reappearance, although we had explicitly not included these from the start. It is therefore natural to expect also the higher degree terms to have an interpretation. This may open up new paths to study Gromov-Witten theory around the conifold point of arbitrary families of threefolds. See also \cite{Szendroi} for speculation about the enumerative geometry content of the locus $t\rightarrow 0$ at strong coupling $\lambda\rightarrow i\infty$.

\item An analytic continuation of the free energies for the resolved conifold for $g>1$ was obtained in \cite[Sec. 4.3] {Pasquetti:2009jg}using the following series expansion of the polylogarithm:
\begin{equation}
    \textrm{Li}_{s}(e^{2\pi i t})= \Gamma(1-s) \, \sum_{m\in \mathbb{Z}} 2\pi i(m-t)^{s-1}\,,
\end{equation}
valid for $\textrm{Re}(s)<0$ and $t\notin \mathbb{Z}$. This gives for the free energies \cite{Pasquetti:2009jg}:
\begin{equation}\label{masslessd0}
\tilde{F}^g(t)= \frac{B_{2g}}{2g (2g-2)} \sum_{k\in \mathbb{Z}} \frac{1}{(2\pi(k-t))^{2g-2}}\,, \quad g>1\,.
\end{equation}

\item From a physical perspective, the singular behavior signals a hypermultiplet becoming massless in the effective four dimensional theory in the limit $t\rightarrow 0$. It is known that the resolved conifold supports only one BPS state from the M-theory perspective, this corresponds to the only non-vanishing GV invariant $n^0_1=1$. However from the $4d$ perspective there is an infinite set of BPS states supported on this geometry corresponding to the Kaluza-Klein modes of the $M2$ brane, these correspond to the $D_0$ brane charge bound to the $D_2$ brane. 
While the statement of the theorem \ref{conanalytic} only shows the singular behavior as the pure $D_2$ becomes massless at $t\rightarrow 0$, the analytic continuation \eqref{masslessd0} obtained in \cite{Pasquetti:2009jg} shows the whole tower of $D_2$ bound to an integer amount of $D_0$ becoming massless at the loci $t\in\mathbb{Z}$.

\end{itemize}
\end{rem}

To fully test the universal behavior of topological string theory near a conifold singularity in this example, we need to analyze furthermore the behavior of the genus $0$ and genus $1$ free energies. 

We obtain the following:
\begin{prop}
The analytic continuation of the genus 0 free energy (the prepotential) $\tilde{F}^0(t)=\textrm{Li}_3(q)$ to the region $t\rightarrow 0$ is given by:
\begin{equation}
\tilde{F}^0(t)= (2\pi)^2 \left( \frac{t^2}{2} \log (2\pi i t) -\frac{3}{4}t^2\right) -(2\pi i)^3 \frac{t^3}{12} - \sum_{n=1}^{\infty} \frac{B_{2n}}{2n(2n+2)!} (-1)^{n+1} (2\pi t)^{2n+2}\,.
\end{equation}
\end{prop}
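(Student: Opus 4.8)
The plan is to run the mechanism of Theorem~\ref{conanalytic} in reverse. Since $\tilde F^0(t)=\textrm{Li}_3(q)$ and $\theta_q\textrm{Li}_s(q)=\textrm{Li}_{s-1}(q)$ by \eqref{polylogder}, three applications of $\theta_q$ give $\theta_q^3\textrm{Li}_3(q)=\textrm{Li}_0(q)=q/(1-q)$, whose continuation near $t\to 0$ was already produced in the proof of the theorem. Passing to the variable $a=2\pi i t$, in which $\theta_q=\partial_a$ and $q=e^a$, one has the Bernoulli generating-function expansion $\textrm{Li}_0(e^a)=e^a/(1-e^a)=-1-1/(e^a-1)=-1-\sum_{n\ge 0}B_n a^{n-1}/n!$. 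The idea is then to establish the proposed closed form by differentiating it three times in $a$ and checking, term by term, that it reproduces this expansion.

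Concretely, writing the claimed right-hand side in the variable $a$ as $-\tfrac{a^2}{2}\log a+\tfrac34 a^2-\tfrac{a^3}{12}-\sum_{n\ge1}\tfrac{B_{2n}}{2n(2n+2)!}a^{2n+2}$ (using $(2\pi t)^{2n+2}=(-1)^{n+1}a^{2n+2}$, $(2\pi i)^3 t^3=a^3$, and $(2\pi)^2 t^2=-a^2$), I would apply $\partial_a^3$: the term $-\tfrac{a^2}{2}\log a$ yields $-1/a$, the cubic yields $-\tfrac12$, and the general term yields $-\tfrac{B_{2n}}{(2n)!}a^{2n-1}$ after using $2n\,(2n-1)!=(2n)!$. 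Summing these reproduces $\textrm{Li}_0(e^a)$ exactly, since its odd-power content is $-\sum_{m\ge1}B_{2m}a^{2m-1}/(2m)!$ and $B_n=0$ for odd $n\ge3$. Thus the matching of the infinite tail is just the identity $\zeta(1-2n)=-B_{2n}/2n$ repackaged, exactly as in the theorem. This proves the statement up to an additive polynomial $c_0+c_1 a+c_2 a^2$ in $a$, which is annihilated by $\partial_a^3$ and is therefore invisible to this argument.

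The whole subtlety is hence the fixing of this quadratic polynomial, which is nothing but the \emph{classical} (polynomial) part of the prepotential. I would pin down $c_2$ together with the branch of the logarithm by demanding the universal leading behavior $\tfrac12 t_c^2\log t_c-\tfrac34 t_c^2$ of Corollary~\ref{diffarb}, which selects the combination $(2\pi)^2(\tfrac{t^2}{2}\log(2\pi i t)-\tfrac34 t^2)$ and in particular the branch $\log(2\pi i t)$ rather than $\log(-2\pi i t)$; the constant and linear coefficients $c_0,c_1$ are the remaining classical ambiguity and are set to zero in this normalization. As a cross-check one may instead invoke the standard inversion formula for the trilogarithm, which shows that the genuine continuation of $\textrm{Li}_3(e^a)$ carries an extra $\zeta(3)+\zeta(2)a$ together with $-\tfrac{a^2}{2}\log(-a)$, so that the difference from the claimed expression is precisely a degree-$\le 2$ polynomial in $a$. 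The main obstacle is exactly this bookkeeping of the integration constants and the logarithm branch — being precise about what ``the analytic continuation'' means modulo the classical degree-$\le 2$ polynomial — rather than any hard analytic estimate; the degree $\ge 3$ content and the singular $t^2\log t$ term are pinned down rigorously by the triple-differentiation check.
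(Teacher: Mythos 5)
Your proposal is correct and is essentially the paper's own argument run in reverse: the paper computes the Yukawa coupling $\partial_a^3\tilde F^0=\theta_q^3\mathrm{Li}_3(q)=q/(1-q)=-1-\sum_{n\ge0}B_n a^{n-1}/n!$ and integrates three times, whereas you differentiate the claimed closed form three times and match it against the same Bernoulli expansion. Your explicit bookkeeping of the residual degree-$\le 2$ polynomial (integration constants and the $\log$ branch, fixed via the conifold normalization of Corollary \ref{diffarb}) is the same normalization the paper adopts implicitly by taking vanishing integration constants, acknowledging only in a footnote that this drops classical pieces such as $\zeta(3)+\zeta(2)a$.
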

\begin{proof}
Since the prepotential $\tilde{F}^0(t)$ is a locally defined function, we need to analytically continue it starting from functions which are geometrically globally defined. At genus zero such an object is the Yukawa coupling\footnote{This definition using only the $q$ expansion part of $F^0(t)$ misses some classical pieces which would correspond to the classical triple intersection numbers.}
$$ C_{ttt}:=\frac{1}{(2\pi i)^3}\frac{\partial^3}{\partial t^3} \tilde{F}^0(t)=\theta^3_q \textrm{Li}_3(q)=\frac{q}{1-q}\,, $$ 
introducing again $a=2\pi i t$, we thus have:
\begin{equation}
C_{ttt}= \partial_a^3 \tilde{F}^0(a)= -1 -\frac{1}{a}\left(\frac{a}{q-1}\right) = -1 - \sum_{n=0}^{\infty}B_n \frac{a^{n-1}}{n!}\,,
\end{equation}
which can be integrated to give:
\begin{equation}
\tilde{F}^0(a)= -\left( \frac{a^2}{2} \log a -\frac{3}{4}a^2\right) -\frac{a^3}{12} - \sum_{n=1}^{\infty} \frac{B_{2n}}{2n(2n+2)!} a^{2n+2}\,.
\end{equation}
The statement of the proposition follows by substituting $a=2\pi i t$.
\end{proof}

For the genus $1$ free energy we have:
\begin{equation}
\tilde{F}^1(t)= -\frac{1}{12} \log (1-q)= -\frac{1}{12} \log ( 2\pi i t) -\frac{1}{12} \log\left(1+\sum_{n=2}^{\infty} \frac{(2\pi i t)^{n-1}}{n!}\right)\,, 
\end{equation}
which shows that the assumptions of corollary \ref{diffarb} are met. We can now consider $\Lambda \in \mathbb{C}^*$ and the rescaling:
$$ \lambda' = \lambda \cdot \Lambda\,, \quad  t' = t \cdot \Lambda\,,$$
and define:
$$ F'(\lambda',t') := \lim_{\Lambda \rightarrow \infty} \left( F(\lambda',t) + \frac{t^2}{2}\log \left(\frac{\Lambda}{2\pi i}\right) -\frac{1}{12}\log \left(\frac{\Lambda}{2\pi i}\right)\right)\,, $$
one fine
$$ F'(\lambda',t')=F_{\textrm{def}}(\check{\lambda'},t')\,, \quad \check{\lambda}'=\frac{\lambda}{2\pi}\,,$$
where $F_{\textrm{def}}$ is the free energy of the deformed conifold \ref{defconfree}. The different normalization of $\lambda$ is also the reason for the $2\pi i$ factors on the cutoff $\Lambda$.
%%%%%%%%%%%%%%%%%%%%%%%%%%%%%%%%%%%
\section{Conclusions}\label{sec:conclusions}

In this work, we addressed an intrinsic characterization of non-perturbative topological string theory. Hereby the topological string free energy is a solution to a difference equation mixing the moduli and the string coupling. The difference equations are derived using the knowledge of the asymptotic expansion and bypass any further physical or mathematical dualities. The difference equations as well as their solutions were obtained for the deformed and resolved conifold geometries. The strong coupling expansion as well as the non-perturbative content were obtained in these cases and matched to known and expected results in the literature. Fortunately, the deformed conifold also gives the expected universal behavior of topological string theory on any family of Calabi-Yau threefolds near a singularity where finitely many states of the effective field theory in $4d$ become massless. Both the difference equation as well as its analytic solution in the string coupling can thus be obtained universally in a limit where the coordinate giving the mass of the particles becoming massless at the singularity as well as the topological string coupling are rescaled and the rescaling is sent to $\infty$.

Although the full exact solution is lost in the rescaled limit, the remaining qualitative results are universal and should shed new lights on contexts where topological string theory connects to problems of quantum gravity and thus requires the study of topological string theory on families of compact CY manifolds. One such problem is the connection to black hole partition functions \cite{Ooguri:2004zv}. It would for example be interesting to match the universal non-perturbative structure which also holds for compact CY to computations of the black hole partition function, perhaps along the lines of \cite{Sen:2011ba}.\footnote{I would like to thank Gabriel Lopes Cardoso for pointing this out, commenting on the potential use of \cite{Alim:2015qma} in the study of black holes.} Another quantum gravity context, where the precise knowledge of the geometry of CY moduli spaces and their non-perturbative content is important is the swampland program which has attracted a lot of attention recently. We expect the quantitative handle on the universal non-perturbative structure of topological strings to be relevant for example in the context of the swampland distance conjecture \cite{Ooguri:2006in}. Relatedly, one may also expect the non-perturbative structure of topological string theory to be relevant in the study of non-perturbative corrections to quaternionic K\"ahler geometries \cite{Ooguri:1996me,Alexandrov:2011va,Alexandrov:2013yva}.

We note that difference equations provide the natural arena to study the connections between topological strings and integrable hierarchies. This was indeed the context for \cite{ADKMV}. The difference equations there were obtained from the quantization of the mirror curves, which are expressed in terms of $\mathbb{C}^*$ variables. The difference equations and their solutions in this context connect however naturally to open topological string theory \cite{ADKMV,Aganagic:2011mi}. The difference equations studied in this work are equations in the closed string moduli, they are closer in spirit to the difference equation conjectured for instance in \cite{Pandharipande} for the Gromov-Witten potential of $\mathbb{P}^1$ and which provides the link to the Toda integrable hierarchy. Indeed, using the difference equation for the resolved conifold, a conjecture \cite{Brini} relating Gromov-Witten theory of the resolved conifold to the Ablowitz-Ladik integrable hierarchy was proved in \cite{alim2021integrable}. It is tempting to speculate that there should be an underlying quantum mechanical problem that gives rise to the difference equations studied here. The latter should be a quantization problem related to the moduli space of closed strings of the geometry, which may close the circle of ideas by relating it to the quantum mechanical interpretation of the closed string as a wave-function \cite{Witten:1993ed}. Recently Quantum K-theory was studied in e.~g.~\cite{Jockers:2019wjh} as a quantum deformation problem which can be applied to the Picard-Fuchs operators of any Calabi-Yau geometry, it would be interesting to understand potential links to the approach in this paper. A further context where the difference equations studied in this work appear is as the equations characterizing the perturbative part of the Nekrasov-Okounkov partition functions of $\mathcal{N}=2$ gauge theories, see \cite[Appendix A]{NO}. The current work suggests that in addition to the perturbative piece, the difference equations also fix the non-perturbative content. It would be interesting to study the implications of this further both in the gauge theory as well as in more general contexts.

%%%%%%%%%%%%%%%%%%%%%%%%%%%%%%%%%

\subsection*{Acknowledgements}
I would like to thank Arpan Saha for collaboration on related projects as well as Florian Beck and Peter Mayr for comments on the draft. I have benefited from many discussions with members of the Emmy Noether research group on String Mathematics as well as from discussions with Vicente Cort\'es, J\"org Teschner, Iv\'an Tulli, Timo Weigand and Alexander Westphal on related projects within the quantum universe cluster of excellence. This work is supported through the DFG Emmy Noether grant AL 1407/2-1.
%%%%%%%%%%%%%%%%%%%%%%%%%%%%%%%%%%%%%%%%%%%%%%%%%%%%

\begin{appendix}
\section{Proof of the difference equation}\label{sec:proof}
We provide here the proof of \cite{Iwaki,alim2020difference} of the difference equations for the free energies of the deformed and the resolved conifold.
\begin{proof}
The proof of the theorems relies on the following.
Consider the generating function of Bernoulli numbers:
\begin{equation}
\frac{w}{e^w-1} = \sum_{n=0}^{\infty} B_n \frac{w^n}{n!}\,.
\end{equation}
Applying $w \frac{d}{dw}$ to both sides and rearranging gives:
\begin{equation}
\frac{w^2 e^w}{(e^w-1)^2} = B_0 - \sum_{n=2}^{\infty} \frac{B_n}{n (n-2)!} w^n = 1 -\sum_{g=1}^{\infty} \frac{B_{2g}}{2g (2g-2)!} w^{2g}\,,
\end{equation}
where the last equality is obtained by noting that all $B_{2n+1}, n\in \mathbb{N}\setminus\{0\}$ vanish. This yields the following:
\begin{equation}
(e^w-2+e^{-w})  \left(\frac{1}{w^2} -\sum_{g=1}^{\infty} \frac{B_{2g}}{2g (2g-2)!} w^{2g-2} \right) = 1\,.
\end{equation}

In the next step, we replace on both sides of this equation the variable $w$ by an operator acting on functions of $t$ namely:
$$ w \rightarrow \lambda \frac{\partial}{\partial a}$$
For the deformed conifold, we act with both sides on $\log a$, obtaining:
\begin{equation}
(e^{\lambda \partial_a}-2\cdot \textrm{id}+e^{-\lambda\partial_a})  \left( \lambda^{-2} \partial_a^{-2} -\sum_{g=1}^{\infty} \frac{B_{2g}}{2g (2g-2)!}  \lambda^{2g-2}\,\partial_a^{2g-2} \right) \log a = \textrm{id}\cdot \log a\,,
\end{equation}
we use 
$$ \partial_a^{2g-2}  \log a= - (2g-3)! \,a^{2-2g},$$ 
and interpret $\partial_a^{-1}$ as an anti-derivative to obtain:
\begin{equation}
(e^{\lambda \partial_a}-2\cdot \textrm{id}+e^{-\lambda\partial_a})  \left( F^0(a) -\frac{1}{12}\log a +\sum_{g=2}^{\infty} \frac{B_{2g}}{2g (2g-2)}  \frac{\lambda^{2g-2}}{a^{2g-2}} \right)  =  \log a= \frac{\partial^2}{\partial a^2} F^0(a)\,.
\end{equation}

For the resolved conifold we start from $\textrm{Li}_1(q)=-\log(1-q)$ and use the property:
\begin{equation} \label{polylogder}
\theta_q \textrm{Li}_s(q) =\textrm{Li}_{s-1} (q)\,,  \quad \theta_q:= q \,\frac{d}{dq}\,,
\end{equation}
we write 
\begin{equation}\label{derrep}
\tilde{F}^g=\frac{(-1)^{g-1}B_{2g}}{2g (2g-2)!} \,\theta_q^{2g-2} \textrm{Li}_1(q)\,, \quad g\ge1\,.
\end{equation}
We make the replacement:
$$ w \rightarrow \check{\lambda} \frac{\partial}{\partial t}= i \lambda \theta_q\,.$$

Acting with both sides on $\textrm{Li}_1(q)$ we obtain:
\begin{equation}
(e^{\check{\lambda} \partial_t}-2\cdot \textrm{id}+e^{-\check{\lambda}\partial_t})  \left(- \lambda^{-2} \theta_q^{-2} -\sum_{g=1}^{\infty} \frac{(-1)^{g-1}B_{2g}}{2g (2g-2)!}  \lambda^{2g-2}\,\theta_q^{2g-2} \right) \textrm{Li}_1(q) = \textrm{id}\cdot \textrm{Li}_1(q) \,,
\end{equation}
by using $\theta_q^{2} \tilde{F}^0=  \theta_q^{2} \textrm{Li}_3(q)= \textrm{Li}_1(q) $ and interpreting $\theta_q^{-1}$ as an anti-derivative, we obtain:
\begin{equation}
(e^{\check{\lambda} \partial_t}-2\cdot \textrm{id}+e^{-\check{\lambda}\partial_t})  \tilde{F}(\lambda,t) = -\theta_q^2 \tilde{F}^0(q) \,,
\end{equation}
which proves the theorem.
\end{proof}

The following corollary was proved in \cite{alim2020difference} but also holds for the deformed conifold case so we inlude it here as well:
\begin{cor}
For every $g\ge 1$ the difference equation gives a recursive differential equation which determines $\frac{\partial^2}{\partial t^2}F^g(t)\, \, g \ge 1$ by:
\begin{equation}
\sum_{k=0}^{g} \frac{1}{(2g-2k+2)!} \left(\frac{1}{2\pi} \frac{\partial}{\partial t}\right)^{2g-2k+2} F^k(t) =0 \,, \quad g\ge1\,.
\end{equation}
\end{cor}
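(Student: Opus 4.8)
The plan is to treat the difference equation as an identity between asymptotic series in $\lambda$ and to extract the recursion by matching the coefficient of each power of $\lambda$. First I would expand the translation operators as exponential series. Since the odd powers of $\partial_t$ cancel between $e^{\check{\lambda}\partial_t}$ and $e^{-\check{\lambda}\partial_t}$, while the constant terms combine to $1-2+1=0$, one obtains the operator identity
\begin{equation}
e^{\check{\lambda}\partial_t}-2\cdot\mathrm{id}+e^{-\check{\lambda}\partial_t}=2\sum_{m=1}^{\infty}\frac{1}{(2m)!}\left(\check{\lambda}\,\partial_t\right)^{2m},
\end{equation}
and using $\check{\lambda}=\lambda/2\pi$ I would rewrite the right-hand side as $2\sum_{m\ge1}\frac{\lambda^{2m}}{(2m)!}\left(\frac{1}{2\pi}\partial_t\right)^{2m}$.

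Next I would insert the asymptotic expansion $\tilde F(\lambda,t)=\sum_{k\ge0}\lambda^{2k-2}F^k(t)$ into the left-hand side and collect powers of $\lambda$. The term indexed by $(m,k)$ carries $\lambda^{2m+2k-2}$, so the coefficient of $\lambda^{2g}$ comes from all pairs with $m+k=g+1$; since $m\ge1$ this forces $0\le k\le g$ and $m=g+1-k$, giving
\begin{equation}
[\lambda^{2g}]\,\bigl(e^{\check{\lambda}\partial_t}-2+e^{-\check{\lambda}\partial_t}\bigr)\tilde F=2\sum_{k=0}^{g}\frac{1}{(2g-2k+2)!}\left(\frac{1}{2\pi}\partial_t\right)^{2g-2k+2}F^k(t).
\end{equation}
On the other side, the inhomogeneous term $\left(\frac{1}{2\pi}\partial_t\right)^2 F^0$ is independent of $\lambda$ and therefore contributes only at order $\lambda^0$.

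Finally I would match orders. At $g=0$ the left coefficient is $\left(\frac{1}{2\pi}\partial_t\right)^2 F^0$, which reproduces the right-hand side, so this order is automatic and merely recovers the equation itself. For every $g\ge1$ there is no matching term on the right, hence the $\lambda^{2g}$-coefficient must vanish; dividing by $2$ yields exactly the stated relation. I would then record the recursive content: the $k=g$ summand equals $\tfrac{1}{2}\left(\frac{1}{2\pi}\partial_t\right)^2 F^g$, so the identity isolates $\partial_t^2 F^g$ in terms of the lower-genus data $F^0,\dots,F^{g-1}$. I expect no deep obstacle here; the only care needed is to justify that an asymptotic (rather than convergent) series may be manipulated coefficientwise in $\lambda$—which is legitimate precisely because the difference equation is understood to hold order by order—and to keep the $g=0$ case, where the inhomogeneous term is absorbed, separate from the homogeneous $g\ge1$ recursion. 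The identical argument with $\partial_a$ in place of $\frac{1}{2\pi}\partial_t$ and shift by $\lambda$ covers the deformed-conifold case.
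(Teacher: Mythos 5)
Your proposal is correct and follows exactly the paper's own (much terser) argument: the paper's proof is precisely ``expand the L.H.S. of the theorem in $\lambda$ and match the coefficients of $\lambda^{2g}$ on both sides,'' and your expansion $e^{\check{\lambda}\partial_t}-2+e^{-\check{\lambda}\partial_t}=2\sum_{m\ge1}\frac{(\check{\lambda}\partial_t)^{2m}}{(2m)!}$, the index bookkeeping $m+k=g+1$, and the observation that the inhomogeneous term only enters at order $\lambda^0$ are just the details the paper leaves implicit. Your additional checks (the $g=0$ consistency and the isolation of the $k=g$ term as $\tfrac12(\tfrac{1}{2\pi}\partial_t)^2F^g$) are correct and complete the argument.
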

\begin{proof}
This follows from expanding the L.H.S. of the theorem in $\lambda$ and then matching the coefficients of $\lambda^{2g}$ on both sides.
\end{proof}
%%%%%%%%%%%%%%%%%%%%%%%%%%%%%%%%%%%%%%%%%%%%%%%%%%%%%%%%

\section{The quintic and its mirror}\label{sec:quintic}
In the following the quintic and its mirror are reviewed. The exposition follows \cite{Alimlectures}, more details can be found in \cite{Candelas:1990rm,Cox:1999ms}.   
The quintic $X$ denotes the CY manifold defined by 
\begin{equation}
X:=\{P(x)=0\} \subset \mathbb{P}^4\,,
\end{equation} 
where $P$ is a homogeneous polynomial of degree 5 in 5 variables $x_1, \dots ,x_5$. The mirror quintic $\check{X}$ can be constructed using the Greene-Plesser construction \cite{Greene:1990ud}. Equivalently it may be constructed using Batyrev's dual polyhedra \cite{Batyrev:1993dm} in the toric geometry language\footnote{For a review of toric geometry see Refs.~\cite{Greene:1996cy,Hori:2003ms}.}.
In the Greene-Plesser construction the family of mirror quintics is the one parameter family of quintics defined by
\begin{equation}\label{GP}
 \{ p(\check{X})=\sum_{i=1}^5 x_i^5-\psi \prod_{i=1}^5 x_i=0 \}  \in \mathbbm{P}^4\,,
\end{equation}
after a $(\mathbbm{Z}_5)^3$ quotient and resolving the singularities.

In the following, the mirror construction following Batyrev will be outlined. The mirror pair of CY 3-folds $(X,\check{X})$ is given as hypersurfaces in toric ambient spaces $(W,\check{W})$. The mirror symmetry construction of Ref.~\cite{Batyrev:1993dm} associates the pair $(X,\check{X})$ to a pair of integral reflexive polyhedra $(\Delta,\check{Delta})$. 

\subsection{\it The A-model geometry}
The polyhedron $\Delta$ is characterized by $k$ relevant integral points $\nu_i$ lying in a hyperplane of distance one from the origin in $\mathbbm{Z}^5$, $\nu_0$ will denote the origin following the conventions of Refs. \cite{Batyrev:1993dm,Hosono:1993qy}.  The $k$ integral points $\nu_i(\Delta)$ of the polyhedron $\Delta$ correspond to homogeneous coordinates $u_i$ on the toric ambient space $W$ and satisfy $n=h^{1,1}(X)$ linear relations:
\begin{equation}\label{toricrel}
\sum_{i=0}^{k-1} l_i^a \, \nu_i=0\, , \quad a=1,\dots,n\,.
\end{equation}
The integral entries of the vectors $l^a$ for fixed $a$ define the weights $l_i^a$ of the coordinates $u_i$ under the $\mathbbm{C}^*$ actions
$$ u_i \rightarrow (\lambda_a)^{l_i^a} u_i\,, \quad \lambda_a \in \mathbbm{C}^*\,.$$

The $l_i^a$ can also be understood as the $U(1)_a$ charges of the fields of the gauged linear sigma model (GLSM) construction associated with the toric variety \cite{Witten:1993yc}. The toric variety $W$ is defined as $W\simeq (\mathbbm{C}^{k}-\Xi)/(\mathbbm{C}^*)^n$, where $\Xi$ corresponds to an exceptional subset of degenerate orbits. To construct compact hypersurfaces, $W$ is taken to be the total space of the anti-canonical bundle over a compact toric variety. The compact manifold $X \subset W$ is defined by introducing a superpotential $\mathcal{W}_X=u_0 p(u_i)$ in the GLSM, where $x_0$ is the coordinate on the fiber and $p(u_i)$ a polynomial in the $u_{{i>0}}$ of degrees $-l_0^a$. At large K\"ahler volumes, the critical locus is at $u_0=p(u_i)=0$ \cite{Witten:1993yc}. 

The quintic is the compact geometry given by a section of the anti-canonical bundle over  $\mathbbm{P}^4$. The charge vectors for this geometry are given by:
\begin{equation}\label{chargevec}
\begin{array}{ccccccc}
&u_0&u_1&u_2&u_3&u_4&u_5\\
l=&(-5&1&1&1&1&1\, )\,.
\end{array}
\end{equation}
The vertices of the polyhedron $\Delta$ are given by:
\begin{eqnarray}
&&\nu_0=(0,0,0,0,0)\,,\quad \nu_1=(1,0,0,0,0)\,,\quad \nu_2=(0,1,0,0,0)\,,\nonumber\\
&&\nu_3=(0,0,1,0,0)\,,\quad \nu_4=(0,0,0,1,0)\,,\quad \nu_5=(-1,-1,-1,-1,0)\,.
\end{eqnarray}

\subsection{\it The B-model geometry}
The B-model geometry $\check{X}\subset \check{W}$ is determined by the mirror symmetry construction of Refs.~\cite{Hori:2000kt,Batyrev:1993dm} as the vanishing locus of the equation
\begin{equation}
p(\check{X})=\sum_{i=0}^{k-1} a_i y_i =\sum_{\nu_i\in \Delta} a_i X^{\nu_i}\, ,
\end{equation}
where $a_i$ parameterize the complex structure of $\check{X}$, $y_i$ are homogeneous coordinates \cite{Hori:2000kt} on $\check{W}$ and $X_m\, , m=1,\dots,4$ are inhomogeneous coordinates on an open torus $(\mathbbm{C}^*)^4 \subset \check{W}$  and $X^{\nu_i}:=\prod_m X_m^{\nu_{i,m}} $ \cite{Batyrev:1993wa}. The relations (\ref{toricrel}) impose the following relations on the homogeneous coordinates
\begin{equation}
\prod_{i=0}^{k-1} y_i^{l_i^a}=1\, ,\quad a=1,\dots,n=h^{2,1}(\check{X})=h^{1,1}(X)\, .
\end{equation}
The important quantity in the B-model is the holomorphic $(3,0)$ form which is given by:
\begin{equation}\label{defomega0}
\Omega(a_i)= \textrm{Res}_{p=0} \frac{1}{p(\check{X})} \prod_{i=1}^4 \frac{dX_i}{X_i} \, .
\end{equation}
Its periods 
\begin{equation}
\pi_{\alpha}(a_i)=\int_{\gamma^\alpha} \Omega(a_i)\, , \quad \gamma^{\alpha} \in H_3(\check{X})\,,\quad\alpha=0,\dots, 2h^{2,1}+1\, ,
\end{equation} 
are annihilated by an extended system of GKZ  \cite{Gelfand:1989} differential operators
\begin{eqnarray}
&&\mathcal{L}(l)= \prod_{l_i >0} \left( \frac{\partial}{\partial a_i}\right)^{l_i} -\prod_{l_i<0} \left( \frac{\partial}{\partial a_i}\right)^{-l_i}\, ,\\
&&\mathcal{Z}_k =\sum_{i=0}^{k-1} \nu_{i,j} \theta_i\, , \quad j=1,\dots,4\, . \quad \mathcal{Z}_0 = \sum_{i=0}^{k-1} \theta_i +1\,,\quad \theta_i=a_i \frac{\partial}{\partial a_i}\,,
\end{eqnarray}
where $l$ can be a positive integral linear combination of the charge vectors $l^a$. The equation $\mathcal{L}(l)\, \pi_0(a_i)=0$ follows from the definition (\ref{defomega0}). The equations $\mathcal{Z}_k\,\pi_\alpha(a_i)=0$ express the invariance of the period integral under the torus action and imply that the period integrals only depend on special combinations of the parameters $a_i$
\begin{equation}\label{lcs}
\pi_\alpha(a_i) \sim \pi_\alpha(z_a)\, ,\quad z_a=(-)^{l_0^a} \prod_i a_i^{l_i^a}\, ,
\end{equation}
the $z_a\,, a=1,\dots,n$ define local coordinates on the moduli space $\mathcal{M}$ of complex structures of $\check{X}$.

The charge vector defining the A-model geometry in Eq.~(\ref{chargevec}) gives the mirror geometry defined by:
\begin{equation}\label{Batyrev}
 p(\check{X})=\sum_{i=0}^5 a_i y_i =0\, ,
\end{equation}
where the coordinates $y_i$ are subject to the relation
\begin{equation}
 y_1 y_2 y_3 y_4 y_5 = y_0^5\,.
\end{equation}
Changing the coordinates $y_i=x_i^5,\, i=1,\dots,5$ shows the equivalence of (\ref{GP}) and (\ref{Batyrev}) with 
\begin{equation}
 \psi^{-5}=-\frac{a_1 a_2 a_3 a_4 a_5}{a_0^5}=: z\,.
\end{equation}

Furthermore, the following Picard-Fuchs  (PF) operator annihilating $\tilde{\pi}_{\alpha}(z_i)=a_0\,\pi_{\alpha}(a_i)$ is found:
\begin{equation}\label{PF}
\mathcal{L}=\theta^4- 5z \prod_{i=1}^4 (5\theta+i)\,, \quad \theta=z \frac{d}{dz}\, .
\end{equation}
The discriminant of this operator is
\begin{equation}
  \label{eq:Discriminant}
\Delta=1-3125\,z\,.
\end{equation}
and the Yukawa coupling can be computed:
\begin{equation}
C_{zzz}=\frac{5}{z^3\, \Delta}\,.
\end{equation}

The PF operator gives a differential equation which has three regular singular points which correspond to points in the moduli space of the family of quintics where the defining equation becomes singular or acquires additional symmetries, these are the points:
\begin{itemize}
 \item $z=0$ , the quintic at this value corresponds to the quotient of $\prod_{i=1}^5 x_i=0$ which is the most degenerate Calabi-Yau and corresponds to large radius when translated to the A-model side.
\item $z=5^{-5}$ this corresponds to a discriminant locus of the differential equation (\ref{PF}) and also to the locus where the Jacobian of the defining equation vanishes. This type of singularity is called a \emph{conifold} singularity. 
\item $z=\infty$ , this is known as the Gepner point in the moduli space of the quintic and it corresponds to a non-singular CY threefold with a large automorphism group. This is reflected by a monodromy of order 5.
\end{itemize}

\end{appendix}

\providecommand{\href}[2]{#2}\begingroup\raggedright\endgroup

%\bibliography{diffeq.bib,LectureNotes_Alim_July2012.bib}
%\bibliography{bpsspecialgeometry}
%\bibliographystyle{utphys}
%\bibliographystyle{alpha}
%\bibliographystyle{plain}

\end{document}